\newtheorem{theorem}{Theorem}
\theoremstyle{plain}
\crefname{theorem}{Theorem}{Theorems}
\newtheorem{corollary}{Corollary}[theorem]
\crefname{corollary}{Corollary}{Corollaries}
\newtheorem{lemma}{Lemma}
\crefname{lemma}{Lemma}{Lemmas}
\newtheorem*{definition}{Definition}
\crefname{definition}{Definition}{Definitions}
\newtheorem{proposition}{Proposition}
\crefname{proposition}{Proposition}{Propositions}
\newtheorem{conjecture}{Conjecture}
\crefname{conjecture}{Conjecture}{Conjectures}
\crefname{appendix}{Appendix}{Appendices}
\crefname{section}{Section}{Sections}
\title{ \bf{All toric K\"{a}hler surfaces with twistor 2-forms}}
\date{}
\author{Sergei G. Ovchinnikov\, \orcidlink{0000-0003-2822-755X}\footnote{sovchinnikov@itmp.msu.ru}
\\ \small \sl Institute of Theoretical and Mathematical Physics, \\ \small \sl Lomonosov Moscow State University, Moscow 119991, Russia}
\begin{document}
    \maketitle

\begin{abstract}
We complete the classification of all smooth 4-dimensional K\"{a}hler geometries admitting a twistor (conformal Killing-Yano) 2-form invariant under a 2-torus action. We establish that there are six geometrically distinct families, and we provide them in a simple form amenable to calculations and compute their curvature. We also find that for toric geometries the square norm of the twistor 2-form is quadratic in moment maps, and we are led to conjecture that this holds when less symmetry is present.
\end{abstract}

\section{Introduction}

K\"{a}hler geometries are of great interest to mathematical physics. For instance, they naturally arise in string theory and supergravities as a structural part of gravitational backgrounds that are supersymmetric in the sense of admitting a real Killing spinor \cite{gauntlett2003,gauntlett2004,cassani2016,Gauntlett:2002fz,gauntlett2008,Gauntlett:2004zh}. In such settings, their curvature is typically further additionally constrained by a class of geometric PDEs which are usually too complicated to be solved both explicitly or numerically. Naturally the subclasses of K\"{a}hler geometries that are amenable to calculations are in great demand, and the primary candidates are those that carry additional twistor structures because of their intrinsic relation to integrability and separability of differential equations. In fact, in many cases, the only progress was achieved precisely with such subclasses. For $d=5$ minimal gauged supergravity, it has been observed \cite{cassani2016,blazquez-salcedo2018} that 4d K\"{a}hler geometries of all known (time-like) supersymmetric solutions possess hamiltonian 2-forms, which led to a progress in showing uniqueness of certain classes of gravitational backgrounds \cite{lucietti_ovchinnnikov_su2_2021,lucietti2022,Lucietti:2023mvj}. More generally, one considers backgrounds of the form $AdS_{n} \times Y_{m}$ where $Y_{m}$ is K\"{a}hler or Sasaki depending on dimension $m$, whose curvature obeys a PDE expressing the integrability of a Killing spinor equation; these correspond to brane-like configurations in different supergravities, and the search for such solutions is a rapidly developing field \cite{Gauntlett:2004hh,Martelli:2004wu,Gauntlett:2019pqg,ferrero2022a,Faedo:2024upq}. Another application are various extremisation principles put forward in mathematical \cite{Futaki:2006cc,Li:2016qwl,Futaki:2017vup} and physical literature \cite{Martelli:2005tp,couzens2019,Gauntlett:2019roi,boido2022,martelli2023,BenettiGenolini:2023kxp}. In the latter setting, these principles generally\footnote{With the exception of volume extremisation for Sasaki-Einstein manifolds \cite{Martelli:2005tp}.} provide necessary but not sufficient conditions for the existence of supersymmetric solutions, and geometries with twistor structures naturally provide testable grounds for the validity and completeness of these principles.

The first major step in classification of K\"{a}hler geometries with twistor structures was the introduction of hamiltonian 2-forms \cite{apostolov2003}, which are Hermitian (J-invariant) closed forms constructed on K\"{a}hler manifolds by adding a multiple of the K\"{a}hler form to a twistor 2-form. These forms are closely related to Killing vector fields and their properties have been extensively studied in the literature\footnote{Geometries with hamiltonian 2-forms were also widely used for construction of new K\"{a}hler and Sasaki geometries, see, e.g., \cite{Martelli:2005wy,legendre2011,Bykov:2017mgc,apostolov2024}.} in various dimensions \cite{apostolov2004,apostolov2004a,apostolov2008,madani2019}. Importantly, in \cite{apostolov2004a} it was shown that all 4d geometries admitting hamiltonian 2-forms break into three distinct families — the bik\"{a}hler, the Calabi type and the orthotoric, each defined up to two arbitrary functions of a single variable. As it turns out, not all twistor 2-forms allow for hamiltonian 2-forms, and the necessary conditions in terms of curvature were found in \cite{apostolov2004}. For toric K\"{a}hler surfaces, we show that the necessary conditions are also sufficient:

\begin{lemma}
    Let $(M,g,J)$ be a 4d toric K\"{a}hler geometry with a SD twistor 2-form $\phi^{tw}$ that is invariant under the torus symmetry and $\iota_{m_1} \iota_{m_2} \phi^{tw} = 0$, that is $\phi^{tw}$ belongs to the negative eigenspace of the inversion of angles $\mathrm{i}_\varphi$.\\ Then $(M,g,J)$ admits a hamiltonian 2-form $\phi = \phi^{tw} + \sigma J$ for some $\sigma$ if and only if the SD part of the Ricci form is a multiple of the twistor 2-form
    \begin{equation*}
        \phi^{tw} \propto \mathcal{R}^+ =\mathcal{R} - \frac{R}{4} J\,.
    \end{equation*}
\end{lemma}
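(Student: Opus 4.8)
The plan is to translate both sides of the equivalence into the toric action--angle description, where everything becomes a finite system of ODEs in the two moment maps. Introduce moment maps $\mu=(\mu_1,\mu_2)$, angles $(\varphi_1,\varphi_2)$ with $m_i=\partial_{\varphi_i}$, a symplectic potential $G(\mu)$, metric $g=G_{ij}\,d\mu^i d\mu^j+G^{ij}\,d\varphi_i d\varphi_j$ and K\"ahler form $J=d\mu^i\wedge d\varphi_i$, where $G_{ij}=\partial_i\partial_j G$ and $(G^{ij})=(G_{ij})^{-1}$. With the (anti-self-dual K\"ahler form) orientation convention relevant here, the self-dual $2$-forms are exactly the primitive $(1,1)$-forms, so any SD form is automatically $J$-invariant, and $\mathcal R^{+}=\mathcal R-\tfrac{R}{4}J$ is precisely the primitive part of the Ricci form, for which there is the standard toric expression in terms of $\log\det(G_{ij})$. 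Torus invariance together with $\iota_{m_1}\iota_{m_2}\phi^{tw}=0$ and self-duality force $\phi^{tw}=A_{ij}(\mu)\,d\mu^i\wedge d\varphi_j$; I would first solve the twistor equation as a first-order system relating the coefficients $A_{ij}$ to the potential $G$, after which the lemma becomes a comparison of two explicit scalar conditions on $A_{ij}$ and $G$.

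For the forward implication I would use that the twistor $2$-form underlying a hamiltonian $2$-form is closed and obeys, by~\cite{apostolov2004}, the dimension-four integrability constraint that the primitive part of the Ricci form is pointwise proportional to it; since here $\phi-\sigma J=\phi^{tw}$ with $\sigma=\tfrac12\langle\phi,J\rangle$, this is precisely $\phi^{tw}\propto\mathcal R^{+}$, together with the byproduct $d\phi^{tw}=0$. (If needed, this constraint is re-derived by differentiating the hamiltonian equation once more and contracting with $J$ via the K\"ahler curvature identities.) This direction is short and amounts to citing the known necessary conditions.

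The converse is where the work lies. Assuming $\phi^{tw}\propto\mathcal R^{+}$, the goal is to manufacture $\sigma$. Since $\nabla J=0$ one has $\nabla_X(\sigma J)=(X\sigma)\,J$, so $\phi^{tw}+\sigma J$ satisfies the hamiltonian equation precisely when $d\sigma$ equals a definite $1$-form $\theta$ that is assembled, through the twistor equation $\nabla_X\phi^{tw}=\tfrac13\iota_X d\phi^{tw}-\tfrac13 X^{\flat}\wedge\delta\phi^{tw}$, algebraically from $d\phi^{tw}$ and $\delta\phi^{tw}$, hence from the $\mu$-derivatives of the $A_{ij}$. It remains to check the integrability $d\theta=0$: commuting covariant derivatives and using the K\"ahler--Ricci (Weitzenb\"ock-type) identity to trade $\nabla\delta\phi^{tw}$ for curvature contractions, $d\theta$ should collapse to a multiple of the primitive part of $\mathcal R$ minus its $\phi^{tw}$-component, which vanishes under the hypothesis; the same computation should also force $d\phi^{tw}=0$ so that $\theta$ is well defined. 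In the toric picture this is the concrete statement that $\sigma$ solves a first-order linear ODE system in $\mu$ whose single compatibility condition, once the toric expression for $\mathcal R$ is substituted, is exactly $\phi^{tw}\propto\mathcal R^{+}$. Integrating $\theta$ --- globally when $M$ is simply connected (e.g.\ compact toric), locally in general --- then yields $\sigma$, and $\phi:=\phi^{tw}+\sigma J$ is a hamiltonian $2$-form by construction.

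I expect the integrability step of the converse to be the main obstacle: showing that $d\theta=0$ is genuinely equivalent to the single proportionality $\phi^{tw}\propto\mathcal R^{+}$, with no residual constraint, and that this already implies $d\phi^{tw}=0$. The toric reduction is what keeps the computation finite --- tensorial identities become ODEs in $\mu_1,\mu_2$ --- but one has to use self-duality and $\iota_{m_1}\iota_{m_2}\phi^{tw}=0$ to their full strength, and handle the degeneration loci where the $A_{ij}$ data collapse, which are also the loci along which the six families ultimately separate.
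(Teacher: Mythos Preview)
Your forward direction is fine and matches the paper's: cite the known necessary condition from \cite{apostolov2004}. The converse, however, contains a genuine misconception that would derail the argument. You assert that ``the twistor $2$-form underlying a hamiltonian $2$-form is closed'' and that the integrability step ``should also force $d\phi^{tw}=0$''. Both are false: writing $\phi^{tw}=e^{\mu}I$, the form $\phi^{tw}$ is \emph{not} closed in general (already for the Calabi and orthotoric families $\mu$ is non-constant), and its closure is neither assumed nor derived anywhere. If you try to build your $1$-form $\theta$ so that it is only well defined once $d\phi^{tw}=0$, you are encoding a spurious constraint and the argument will not close.

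The paper avoids this entirely by using the definition directly: a hamiltonian $2$-form is a \emph{closed} $J$-invariant $2$-form whose SD part is twistor. Since $\phi^{tw}$ is already SD (hence $J$-invariant) and twistor, the only thing to arrange is $d(\phi^{tw}+\sigma J)=0$. Using the alternative form of the twistor equation (\cref{cor:alt-form-of-tw-eqn}) to compute $d(e^{\mu}I)$, this reads $d\sigma=-3\,(IJ)\,de^{\mu}$ --- an explicit candidate for your $\theta$, obtained without ever separating $d\phi^{tw}$ and $\delta\phi^{tw}$. The integrability condition is then the single $2$-form equation $d\big(\iota_{de^{\mu}}IJ\big)=0$. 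Toric invariance forces this $2$-form to have only a $dx_1\wedge dx_2$ component, so integrability is one scalar; the paper identifies that scalar with $(JI^{3})^{mn}\nabla_m\nabla_n e^{\mu}$ and, by differentiating the twistor equation once more and contracting with $J$ and $I^2$, shows it equals a nonzero multiple of $\mathcal R_{pq}I^{3\,pq}$. Since the $I^{2}$-component of $\mathcal R$ already vanishes by toric symmetry, the hypothesis $\mathcal R^{+}\propto\phi^{tw}$ is exactly $\mathcal R_{pq}I^{3\,pq}=0$, and integrability follows. Your instinct that the converse reduces to a single scalar compatibility was correct; the route through the full covariant hamiltonian equation and a hoped-for $d\phi^{tw}=0$ is the wrong way to reach it.
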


K\"{a}hler surfaces admitting additional twistor 2-forms can be studied in the framework of ambik\"{a}hler structures \cite{Apostolov:2013oza,Apostolov:2013lea}. Recall that a twistor 2-form is equivalent to a locally conformally K\"{a}hler structure \cite{pontecorvo1992}. Ambik\"{a}hler structures are then defined as pairs of 4d K\"{a}hler metrics in the same conformal class whose K\"{a}hler structures introduce opposite orientations. If both K\"{a}hler geometries in a pair are invariant under a common torus action, the geometry is known as ambitoric, and in the same paper, they were shown to belong to one of the five types, each parameterised by two functions of a single variable and a number of real constants. These types are pairs of toric K\"{a}hler products and toric Calabi geometries as well as three regular ambitoric structures: elliptic, parabolic and hyperbolic. A priori, however, it is not clear whether individual K\"{a}hler metrics in the pairs are geometrically distinct in the sense of being diffeomorphic, perhaps up to a redefinition of free functions and constants. By providing an alternative construction for ambitoric geometries, we show that in all pairs except the parabolic type, the geometries can indeed be obtained by such an isomorphism, while for the parabolic type, the geometries are distinct for any parameters; we then suggest a convenient chart for each geometry. We further show that that if the conformally K\"{a}hler structure associated with a twistor 2-form does not have a common torus action, then the geometry cannot be smooth. Our main result is the following theorem. 

\setcounter{theorem}{0}
\begin{theorem}\label{thm:main}
    Let $(M,g,J)$ be a smooth toric K\"{a}hler surface, and $\phi^{tw}$ a self-dual twistor 2-form on it, which is invariant under the torus symmetry. We further assume that the axis is non-empty, i.e. there is a point where a linear combination of axial Killing fields degenerate.\\
    Then locally the geometry is of the form 
    \begin{equation*}
        \mathrm{d} s^2 = \dfrac{e^\mu}{F} \mathrm{d} \xi^2 \,+\, \dfrac{e^\mu}{G} \mathrm{d} \eta^2 \,+\, \dfrac{F}{e^\mu} \left(x_\xi \mathrm{d} \Psi\,+\, y_\xi \mathrm{d} \Phi \right)^2 \,+\,\dfrac{G}{e^\mu} \left(x_\eta \mathrm{d} \Psi\,+\, y_\eta \mathrm{d} \Phi\right)^2
    \end{equation*}
    where $F=F(\xi)$, $G = G(\eta)$ are two arbitrary functions, $x,y$ are moment maps wrt axial Killing fields $m_\Psi=\frac{\partial}{\partial \Psi},$ $m_\Phi = \frac{\partial}{\partial \Phi}$, and $e^\mu$ is the norm of the twistor 2-form \cref{eq:constr_tw-form}. The geometric data $(\mu, x, y)$ belongs to one of six independent families: product-toric (bik\"{a}hler) \cref{eq:twist-spec_Product-toric}, Calabi-toric \cref{eq:twist-spec_Calabi-toric}, orthotoric \cref{eq:twist-spec_Orthotoric}, elliptic \cref{eq:twist-gen-sol-elliptic}, conformally orthotoric (parabolic) \cref{eq:twist-spec_conf-ortho} and hyperbolic \cref{eq:twist-gen-sol-hyperbolic}.
    For the first three families, their twistor 2-forms give rise to hamiltonian 2-forms, and these comprise all toric hamiltonian 2-form geometries. 
\end{theorem}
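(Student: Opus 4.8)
The plan is to reduce the classification to the Apostolov--Calderbank--Gauduchon description of ambitoric $4$-manifolds \cite{Apostolov:2013oza,Apostolov:2013lea} and then to recast each case in one uniform moment-map chart. \emph{Step 1 (reduction to an ambitoric structure):} by the equivalence between self-dual twistor $2$-forms and locally conformally K\"{a}hler structures \cite{pontecorvo1992} recalled above, $\phi^{tw}$ determines a metric $\hat g = e^{-2\mu}g$ in the conformal class of $g$ that is K\"{a}hler for some $\hat J$ inducing the opposite orientation, the conformal factor being a fixed normalisation of the pointwise norm $e^\mu$ of $\phi^{tw}$. Torus-invariance of $\phi^{tw}$ makes $\mu$, hence $\hat g$, invariant under the $2$-torus; smoothness of the geometry, together with the fact established in the introduction that a twistor $2$-form whose conformally K\"{a}hler structure lacks a common torus action cannot be smooth, forces $\phi^{tw}$ into the $-1$-eigenspace of $\mathrm{i}_\varphi$ --- equivalently $\iota_{m_1}\iota_{m_2}\phi^{tw}=0$ --- and (with the axis non-empty) makes $\hat J$ torus-invariant as well. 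Hence $(g,J)$ and $(\hat g,\hat J)$ form an ambitoric structure, and by \cite{Apostolov:2013oza} the geometry is locally a K\"{a}hler product, a Calabi type, or one of the three regular ambitoric types --- elliptic, parabolic, or hyperbolic --- each carrying two free functions of one variable and finitely many real constants.

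\emph{Step 2 (uniform chart and the six families):} for each type I pass to momentum coordinates $(\xi,\eta)$, the two roots of the relevant pencil of Killing potentials, and take $x,y$ to be the moment maps, expressed in these coordinates, of the basis $m_\Psi,m_\Phi$ of axial Killing fields singled out by the axis (say with $m_\Psi$ vanishing along it). Rewriting the ACG normal forms then puts the metric in the displayed shape, with the two free functions $F(\xi),G(\eta)$ and the conformal factor $e^\mu$, and one verifies that $\mu$ together with $x,y$ satisfies the constraint \cref{eq:constr_tw-form} identifying $e^\mu$ with the norm of the twistor $2$-form; conversely any $(\mu,x,y)$ solving that constraint reconstructs such a geometry. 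Reading off $(\mu,x,y)$ case by case produces six families, with the count as announced in the introduction: the K\"{a}hler-product and Calabi pairs give product-toric and Calabi-toric; the elliptic and hyperbolic pairs each give a single family because the two K\"{a}hler metrics of the conformal pair are interchanged by an explicit diffeomorphism and a reparametrisation of the data; and the parabolic pair, whose two metrics are never isomorphic, gives the orthotoric family together with its conformally orthotoric partner, which are separated by a diffeomorphism-invariant of the curvature.

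\emph{Step 3 (the hamiltonian case):} for product-toric, Calabi-toric and orthotoric I produce $\sigma$ with $\mathrm{d}(\phi^{tw}+\sigma J)=0$; by the Lemma above this is equivalent --- and quicker to check --- to verifying, from the explicit curvature of each of these three families, that $\mathcal{R}^+=\mathcal{R}-\tfrac{R}{4}J\propto\phi^{tw}$. The same curvature computation gives $\mathcal{R}^+\not\propto\phi^{tw}$ for the elliptic, parabolic and hyperbolic families, so by the Lemma none of the last three carries a hamiltonian $2$-form $\phi^{tw}+\sigma J$. Finally, a toric K\"{a}hler surface equipped with a hamiltonian $2$-form $\phi$ carries the self-dual, torus-invariant twistor $2$-form $\phi^{tw}=\phi-\sigma J$, which meets the hypotheses of the theorem (the eigenspace condition being forced by smoothness as in Step 1) and hence those of the Lemma; it is therefore one of the six families, and by the Lemma it lies among the three with $\mathcal{R}^+\propto\phi^{tw}$. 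Thus product-toric, Calabi-toric and orthotoric exhaust the toric hamiltonian $2$-form geometries, consistently with the bik\"{a}hler/Calabi/orthotoric trichotomy of \cite{apostolov2004a}.

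The real work is Step 2: re-deriving the ACG normal forms in the moment-map chart, and above all the redundancy analysis --- showing that the elliptic and hyperbolic pairs collapse to one family each, by exhibiting the isomorphisms, while the parabolic pair genuinely splits, by a curvature invariant separating orthotoric from conformally orthotoric, with no case lost or counted twice. A further subtlety is keeping the smoothness and non-empty-axis hypotheses in play throughout, since they are exactly what forces the $-1$-eigenspace and the torus-invariance of $\hat J$ in Step 1 and what constrains the admissible ranges of the free constants in each family.
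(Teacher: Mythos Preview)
Your approach is correct in outline but takes a genuinely different route from the paper. The paper does \emph{not} import the ACG ambitoric classification as a black box; instead it builds the displayed chart directly from the twistor equation (\cref{lem:diag_chart}), reducing everything to the system \cref{eq:constr_lem_final-constr} for $(\mu,x,y)$ in terms of four functions $c_{1,2,3,4}$ of one variable, and then solves the resulting ODE case-by-case in \cref{sub:construction_of_the_twistor_form_special,sub:construction_of_the_twistor_form_general}. The six families emerge from that analysis, and the uniform metric form is a by-product of the construction rather than something to be verified type-by-type. Non-isomorphism is then settled in \cref{sub:non-isomorphism_of_the_families} via the curvature criterion of \cref{lem:ham_curv-constr} (separating the three hamiltonian families from the rest) together with \cref{prop:square_norm is quadratic}, which records that $e^{2\mu}$ is a quadratic in the moment maps whose signature distinguishes elliptic, parabolic and hyperbolic. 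Your Step~3 matches the paper's treatment of the hamiltonian case.

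What each approach buys: yours is shorter if one is willing to quote \cite{Apostolov:2013oza}, but the ``real work'' you flag in Step~2 --- translating each ACG normal form into the displayed chart and carrying out the redundancy analysis --- is essentially the same labour the paper does from scratch, so the saving is modest. The paper's self-contained route produces the explicit coordinate isomorphisms (e.g.\ the conformal self-duality of the elliptic and hyperbolic families) and the quadratic-norm observation as outputs rather than inputs. One small muddle in your Step~1: the phrase ``conformally K\"{a}hler structure lacks a common torus action'' is not quite the right description of the obstruction. The conformal partner $\hat J$ is always torus-invariant once $\phi^{tw}$ is; what fails in the positive $\mathrm{i}_\varphi$-eigenspace is that the torus orbits are not Lagrangian for $\hat J$ (since $\iota_{m_1}\iota_{m_2}I^2\neq 0$), so the ACG hypotheses are not met. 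The exclusion of that eigenspace is exactly \cref{prop:no-I2}, whose proof genuinely uses smoothness at the axis --- you invoke it correctly, but it is a result of this paper, not prior literature.
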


Finally, the explicit form of the geometries allows us to make an interesting observation — we notice that for all families, the norm squared of the twistor 2-form is at most quadratic in moment maps. We therefore put forward a conjecture that this should hold in general, which would allow for the generalisation of our result to geometries outside of the toric class.
\begin{conjecture}\label{conj:conj}
    Let $(M,g,J)$ be a 4d K\"{a}hler geometry with a twistor 2-form $\phi^{tw}$ invariant under a hamiltonian action of a Lie group $G$, and let $e^\mu$ be the norm of $\phi^{tw}$.\\
    Then the norm squared $e^{2 \mu}$ of the twistor 2-form is at most a quadratic polynomial in moment maps corresponding to the maximal torus algebra of $G$.
\end{conjecture}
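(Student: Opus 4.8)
The plan is to lift the explicit verification behind \cref{thm:main} to an intrinsic, curvature-theoretic argument. By Pontecorvo's correspondence \cite{pontecorvo1992} the twistor $2$-form $\phi^{tw}$ is equivalent to a locally conformally Kähler structure: there is a second metric $\bar g = e^{c\mu} g$ (with $c$ a fixed normalisation constant and $e^\mu=|\phi^{tw}|$) which is Kähler for a complex structure $\bar J$ inducing the opposite orientation --- the local ambikähler situation of \cite{Apostolov:2013oza}. Since $\phi^{tw}$ is $G$-invariant, so are $\bar J$ and $\bar g$, and therefore the maximal torus $T\subset G$, with basis $\zeta_1,\dots,\zeta_r$ of its Lie algebra, acts by isometries of $\bar g$ as well; its moment map with respect to the Kähler form $\omega$ exists by hypothesis, and the moment map with respect to $\bar\omega=\bar g(\bar J\cdot,\cdot)$ exists at least locally because each $\iota_{\zeta_a}\bar\omega$ is closed. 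Write $\mu_a$ and $\bar\mu_a$ for the two systems.

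The next step is to use the conformal factor to tie $\bar\mu_a$ to $\mu_a$. Writing $\bar\omega = e^{c\mu}\,\omega\circ P$ with $P$ the endomorphism relating $J$ and $\bar J$, contracting with a generator $\zeta_a$, and using the defining relations $\mathrm{d}\mu_a=\iota_{\zeta_a}\omega$, $\mathrm{d}\bar\mu_a=\iota_{\zeta_a}\bar\omega$ together with $\mathcal{L}_{\zeta_a}P=0$, one obtains an explicit $1$-form identity expressing each $\mathrm{d}\bar\mu_a$ through $e^{c\mu}$, $\mathrm{d}(e^{c\mu})$ and the $\mathrm{d}\mu_b$. Integrating it, each $\bar\mu_a$ is, up to an additive constant, $e^{c\mu}$ times an affine-linear function of the $\mu_b$ --- the projective-type relation between the two moment polytopes already visible in the ambitoric classification; the additive constant is the harmless pure-gauge/monodromy ambiguity of the locally conformally Kähler structure. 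Consequently any quantity symmetric in the two systems, the norm $e^{2\mu}$ among them, is confined to a controlled class.

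The decisive step is to extract a second-order equation for $e^{2\mu}$ directly from the twistor equation. Feeding the conformal Killing--Yano equation for $\phi^{tw}$, together with the curvature obstructions for twistor $2$-forms on Kähler surfaces established in \cite{apostolov2004} (which in the hamiltonian-admitting cases collapse to the proportionality $\phi^{tw}\propto\mathcal{R}^+$ of the Lemma above), into the Hessian of $|\phi^{tw}|^2$, one should arrive at an identity of the form $\nabla\,\mathrm{d}(e^{2\mu}) = \Lambda\,g + Q$, where $\Lambda$ is a function and $Q$ is built from the $\mathrm{d}\mu_a$ with coefficients that are affine in the $\mu_b$. Contracting twice with the gradients $\mathrm{grad}\,\mu_a$ --- the Hamiltonian directions $J\zeta_a$ up to sign --- turns this into $\partial^3(e^{2\mu})/\partial\mu_a\partial\mu_b\partial\mu_c=0$ on the open dense locus where $T$ acts freely, which is exactly the assertion; the six families of \cref{thm:main} then serve as a consistency check, pinning down $\Lambda$ and confirming that no residual toric obstruction survives.

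The main obstacle is to push this last step through without the toric hypothesis and for non-abelian $G$. When $M$ is not toric, or when $\dim G>\dim T$, the $G$-invariant function $e^{2\mu}$ is no longer a function of the torus moment maps alone, so ``differentiate three times'' is unavailable; one needs the genuinely coordinate-free assertion that the right-hand side of the $\mathrm{Hess}(e^{2\mu})$ equation is at most \emph{affine} in the torus moment maps, not merely $G$-invariant, and proving this seems to require the full strength of the Kähler twistor-form curvature identities together with $\mathcal{L}_{\zeta_a}J=\mathcal{L}_{\zeta_a}\bar J=0$; controlling the source term $\Lambda$ and the would-be-affine term $Q$ is the crux. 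A secondary, more cosmetic difficulty is that $\bar g$ and $\bar\omega$ are in general only locally defined, so a global refinement of the conjecture would have to track the monodromy of the locally conformally Kähler structure --- though the statement as phrased, with moment maps determined only up to additive constants, is local and unaffected.
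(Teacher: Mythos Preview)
The statement you are attempting to prove is a \emph{conjecture}: the paper does not prove it, and explicitly presents it as an open problem motivated by the toric case. The only thing the paper establishes is \cref{prop:square_norm is quadratic}, which is the toric special case, and that is proved by direct inspection of the explicit formulas for the six families --- there is no intrinsic or curvature-theoretic argument in the paper at all. So there is no ``paper's own proof'' to compare against.

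Your proposal is a research strategy rather than a proof, and you are candid about this. The decisive step --- obtaining an identity $\nabla\,\mathrm{d}(e^{2\mu}) = \Lambda\,g + Q$ with $Q$ affine in the moment maps --- is asserted with ``one should arrive at'' rather than derived, and you correctly identify that controlling $\Lambda$ and $Q$ without the toric assumption is the crux and is unresolved. The earlier step, expressing $\bar\mu_a$ as $e^{c\mu}$ times an affine function of the $\mu_b$, is plausible from the ambitoric picture but is also not established here in the generality claimed; in the non-toric setting the endomorphism $P$ relating $J$ and $\bar J$ need not interact with the moment maps as cleanly as in the toric case, and the integration step hides exactly the content of the conjecture. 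In short: this is a reasonable outline of how one might attack the conjecture, but it is not a proof, and the paper does not claim one either.
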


The paper is organised as follows: in \cref{sec:twistor_kahl} we present necessary background material which includes some basics on toric K\"{a}hler surfaces and twistor 2-forms and show that for smooth K\"{a}hler geometries the twistor 2-form must generate the same torus action. In \cref{sec:toric_kahl} we first provide an alternative construction for ambitoric geometries and then prove our classification result with the help of our observation \cref{prop:square_norm is quadratic}. \cref{sec:hamiltonian_2_forms} introduces an important class of geometries that admit a twistor 2-form. A novel result are the necessary and sufficient conditions on curvature from the existence of hamiltonian 2-forms for toric K\"{a}hler geometries. Finally, in \cref{app:ricci_curvature_for_geometries_with_twistor_2_form} we provide the Ricci and scalar curvature for new geometries and in \cref{app:solution_of_the_ODEs} we give detailed solutions of the ODEs from \cref{sec:toric_kahl}.

\section{Twistor 2-forms and K\"{a}hler geometries} 
\label{sec:twistor_kahl}

\subsection{K\"{a}hler geometry}
\label{sub:kahl_geometry}
A \textit{K\"{a}hler surface} $(M,g,J)$ is an orientable Riemannian four-dimensional manifold together with an \textit{anti-self-dual} complex structure $J$, such that $\nabla J =0$ where $\nabla$ is the Levi-Civita connection wrt $g$. With a certain abuse of notation, we are denoting the almost complex structures and their respective 2-forms by the same letter. 

A \textit{toric K\"{a}hler surface} is a K\"{a}hler surface together with an effective hamiltonian action of the real torus $T^2 \cong U(1)^2$; we are interested in smooth K\"{a}hler geometries, and for these we include a \textit{non-empty axis} as part of the definition. This means that the geometry admits two Killing vector fields $m_i$, $i=1,2$, normalised to have $2 \pi$ periodic orbits, and the K\"{a}hler form is invariant under their action $\mathcal{L}_{m_i} J =0$\,. The hamiltonian action allows us to introduce moment maps $x_i$
\begin{equation}
    \iota_{m_i} J = - \mathrm{d} x_{i}\,.
\end{equation}
The moment maps define a canonical coordinate system $(x_i, \varphi^i)$ where $m_i = \partial_{\varphi^i}$. In this chart \cite{abreu2001},
\begin{align}\nonumber
    &g = G^{ij} \mathrm{d}x_i \mathrm{d}x_j \,+\, G_{ij} \mathrm{d}\varphi^i\,\mathrm{d}\varphi^{j}\,,\\\label{eq:kahl_geometry-h-Gij-J}
    &G^{ij} = \partial^{i} \partial^{j} \mathrm{g}\,,\\\nonumber
    &J = \mathrm{d}\left(x_i \mathrm{d} \varphi^{i}\right)\,,
\end{align}
where $\mathrm{g} = \mathrm{g}(x)$ is the symplectic potential, $G_{ij}$ is the inverse matrix of the Hessian $G^{ij}$ and the derivatives stand for $\partial^i := \partial/\partial x_{i}$. The axes are defined by $\{p \in M |\mathrm{det}\, g(m_i, m_j)|_p = 0\}$\,, and we assume it to be non-empty. Note from \cref{eq:kahl_geometry-h-Gij-J}, that this Gram determinant is
\begin{equation}
    \mathrm{det}\, g(m_i, m_j) = \mathrm{det}\, G_{ij} =: G_{gr}\,.
\end{equation}

One can introduce a basis of anti-self-dual (ASD) and self-dual (SD) forms by picking $(J^1 =J, J^2, J^3)$ for ASD forms 
\begin{align}
    &J^2 = G_{gr}^{-1/2}\, \mathrm{d} x_1 \wedge \mathrm{d} x_2 - G_{gr}^{1/2} \,\mathrm{d} \varphi^1 \wedge \mathrm{d} \varphi^2\,,\\
    &J^3 = -2 G_{gr}^{1/2} \, G^{i [1} \mathrm{d} x_i \wedge \mathrm{d} \varphi^{2]}\,,
\end{align}
and for SD forms
\begin{align}\label{eq:kahl_I1}
    &I^1 = \mathrm{d} x_1 \wedge \mathrm{d} \varphi^1 + \frac{2 G^{12}}{G^{11}} \mathrm{d} x_2 \wedge \mathrm{d} \varphi^1 - \mathrm{d} x_2 \wedge \mathrm{d} \varphi^2 \,,\\\label{eq:kahl_I2}
    &I^2 = G_{gr}^{-1/2}\, \mathrm{d} x_1 \wedge \mathrm{d} x_2 + G_{gr}^{1/2}\,\mathrm{d} \varphi^1 \wedge \mathrm{d} \varphi^2\,,\\\label{eq:kahl_I3}
    &I^3 = G_{gr}^{1/2} (G^{12} \mathrm{d} x_1 \wedge \mathrm{d} \varphi^1 - G^{11} \mathrm{d} x_1 \wedge \mathrm{d} \varphi^2 \\ \nonumber
    &\hphantom{I^3 =} - \frac{G_{gr}^{-1} - \left( G^{12} \right)^2}{G^{11}} \mathrm{d} x_2 \wedge \mathrm{d} \varphi^1 - G^{12} \mathrm{d} x_2 \wedge \mathrm{d} \varphi^2)\,.
\end{align}
Toric geometries naturally admit an involution given by the inversion of angles
\begin{equation}\label{eq:kahl_geometry_angular-involution}
    \mathrm{i}_\varphi: \quad \varphi^i \mapsto - \varphi^{i}\,,
\end{equation}
and the basis above respects it in the sense that the positive eigenspace of $\mathrm{i}_\varphi$ is spanned by $J^2, I^2$, and the rest, including the K\"{a}hler form, span the negative eigenspace. Observe, that the negative eigenspace is the space of all 2-forms obeying $\iota_{m_1} \iota_{m_2}\Omega = 0$. Therefore, one can further show that any smooth closed invariant under the toric symmetry 2-form $\Omega$ on $M$ must belong to it. Indeed, from closure and toric invariance $\iota_{m_1} \iota_{m_2}\Omega$ is a constant, which, further, must vanish on the axis. Hence $\iota_{m_1} \iota_{m_2}\Omega = 0$ for any such smooth 2-form.

Finally, the behaviour at the axis is given by the following lemma \cite{lucietti2022}.
\begin{lemma} \label{lem:tor_axis}
Consider a neighbourhood of a component of the axis defined by the vanishing of $v:= v^i m_i$, where $(v^i )\in \mathbb{Z}^2$ are coprime integers, and let
\begin{equation}
\ell_v(x):= v^i x_i +c_v  \; , \label{eq:tor_axisline}
\end{equation}
where $c_v$ is a constant. Then:
\begin{enumerate}
\item  The axis component corresponds to a straight line $\ell_v(x)=0$ in symplectic coordinates and away from the axis
\begin{equation}
\ell_v(x)>  0
\end{equation}
\item The symplectic potential can be written as
\begin{equation}
\mathrm{g}= \frac{1}{2}\ell_v(x) \log \ell_v(x) + \tilde{\mathrm{g}}  \; ,
\end{equation}
where $\tilde{\mathrm{g}}$ is smooth at $\ell_v(x)=0$.
\end{enumerate}
\end{lemma}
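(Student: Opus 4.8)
The plan is to reconstruct, in real dimension four, the standard local structure theory of toric K\"ahler metrics near a facet of the moment polytope (Guillemin's and Abreu's theorems). The statement is local, so I would fix an interior point $p$ of the axis component along which $v=v^i m_i$ vanishes; there the circle generated by $v$ collapses while a complementary circle still acts locally freely, which is the toric analogue of polar coordinates on a disc.

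For item~1, first note that since the $m_i$ generate an effective action with $2\pi$-periodic orbits and $(v^i)$ are coprime, $v$ is primitive in the weight lattice $\mathbb{Z}^2$ and extends to a basis $(v,w)$ with $\det(v,w)=\pm1$. Passing to the angles dual to $(v,w)$ and the corresponding moment maps, one of the latter is $\ell:=v^i x_i+c_v$, with the constant $c_v$ fixed by demanding that the moment map of $v$ vanish on the axis component. Smoothness of $M$ at $p$ is exactly the Delzant condition, which forces the moment-map image near $p$ to be the half-space with inward primitive normal $v$; hence locally it is $\{\ell\ge0\}$, the axis component lies over $\{\ell=0\}$, and $\ell>0$ in the interior.

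For item~2, I would invoke the equivariant symplectic slice theorem at $p$: a $T^2$-invariant neighbourhood of the orbit is equivariantly symplectomorphic to a neighbourhood of the zero section in $T^*S^1\times\mathbb{C}$, on which the circle generated by $v$ rotates $\mathbb{C}$ with weight $1$ (by smoothness) and the circle generated by $w$ translates $S^1$, so that the moment map of $v$ is $\tfrac12|z|^2$, i.e.\ $\ell=\tfrac12|z|^2$. Writing the invariant K\"ahler metric in the adapted holomorphic/action--angle coordinates and Legendre-transforming the K\"ahler potential into $\mathrm{g}(x)$, the only contribution singular as $\ell\to0^+$ is that of the $\mathbb{C}$-factor, where $\partial^2\mathrm{g}/\partial\ell^2\sim1/\ell$ integrates to $\tfrac12\ell\log\ell$ up to terms smooth in $\ell$; the $T^*S^1$-block stays non-degenerate and smooth. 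Back in the original $x_i$ this gives $\mathrm{g}=\tfrac12\ell_v(x)\log\ell_v(x)+\tilde{\mathrm{g}}$ with $\tilde{\mathrm{g}}$ smooth up to $\ell_v=0$. A more hands-on alternative: require $G_{ij}=(\mathrm{Hess}\,\mathrm{g})^{-1}$ to extend smoothly to $M$ and $G_{gr}=\det G_{ij}$ to vanish to exactly first order at the axis with the normal derivative fixed by the closing-up of the degenerating circle; these are precisely the boundary conditions solved by $\tfrac12\ell\log\ell$ plus a function smooth up to the boundary, with any two solutions differing by such a function.

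The hard part will be the regularity assertion inside item~2: not merely that $\tfrac12\ell\log\ell$ is \emph{an} admissible symplectic potential with the right boundary behaviour, but that \emph{every} symplectic potential of a metric smooth across the axis differs from it by a function smooth up to the boundary, so that no milder and no wilder boundary singularity is compatible with smoothness of $g$. This is Abreu's boundary-regularity analysis, and the cleanest route is the local normal form above, which reduces the claim to the one-variable statement that an $S^1$-invariant K\"ahler metric smooth on a disc has moment map $\tfrac12|z|^2$ and symplectic potential $\tfrac12 t\log t$ plus a function smooth in the moment map $t$.
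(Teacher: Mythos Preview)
The paper does not actually prove this lemma: it is stated with a citation to \cite{lucietti2022} and used as background input, so there is no in-paper argument to compare against. Your proposal is the standard Guillemin--Abreu route (equivariant normal form near a facet, Delzant condition for item~1, and the Legendre transform of the disc model giving the $\tfrac12\ell\log\ell$ boundary term for item~2), which is exactly what the cited reference and the underlying literature do; in that sense your approach is the expected one, and it is correct in outline.

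One small point worth tightening: in item~1 you appeal to ``the Delzant condition'' as something forced by smoothness, but in the present paper's setting the lemma is stated for a general smooth toric K\"ahler surface near a single axis component, not a priori for a compact Delzant polytope. The cleaner phrasing is the one you give at the end of item~2: smoothness of $g$ across the axis together with the $2\pi$-periodicity of the degenerating circle fixes both the first-order vanishing of $G_{gr}$ and the coefficient of $\tfrac12\ell\log\ell$, without invoking a global polytope. That local conical-singularity/closing-up argument is self-contained and is what the cited source actually uses; your equivariant-slice version is equivalent but slightly heavier machinery for a purely local statement.
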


\subsection{Twistor forms and K\"{a}hler metrics}
\label{sub:Tw&K}

Twistor 2-forms, also known as conformal Killing-Yano 2-forms, are a generalisation of conformal Killing vectors (seen as twistor 1-forms). Namely, on a Riemannian manifold $(M,g)$ the orthogonal decomposition of a covariant derivative of a general $p$-form is
\begin{equation}
    \nabla \phi \in \Lambda^1 M \otimes \Lambda^{p} M = \Lambda^{p+1}M \oplus \Lambda^{p-1}M \oplus \mathcal{T}^{p,1}M
\end{equation}
where the first two terms in the sum are the exterior derivative and the co-differential bundles, and the last term $\mathcal{T}^{p,1}M$ is the \textit{Cartan product}. For 1-forms, $\mathcal{T}^{1,1}M \cong S_0^2 M$ is isomorphic to the bundle of trace-free symmetric 2-tensors, and the conformal Killing vector equation is equivalent to stating that the projection on it vanishes. Similarly for general $p$, the \textit{twistor p-form} is the form $\phi$ such that
\begin{equation}
    \nabla \phi \in \Lambda^{p+1}M \oplus \Lambda^{p-1}M\,.
\end{equation}
It is convenient for us to reformulate this definition more explicitly.
\begin{definition}\label{def:twistor-2form}
    Let $(M,g)$ be a $4d$ Riemannian manifold with a Levi-Civita connection $\nabla$. The twistor 2-form is a non-trivial solution to the following equation\footnote{In index notation this reads as \begin{equation*}
    \nabla_m  \phi^{tw}_{np} = \partial_{[m}  \phi^{tw}_{np]} + \frac{2}{3} g_{m[n} (\nabla  \phi^{tw})_{p]}\,.
\end{equation*}} $\forall X \in TM$
    \begin{equation}\label{eq:Tw&K_twistor-definition}
    \nabla_X \phi^{tw} = \frac{1}{3} \iota_X \mathrm{d} \phi^{tw} + \frac{1}{3} X \wedge \delta  \phi^{tw}\,.
\end{equation}
\end{definition}

Twistor 2-forms form a linear space which further decomposes under any involution of $\Lambda^2 M$ that commutes with the connection into a direct sum of its eigenspaces. One such involution is the Hodge duality itself, which allows to consider SD and ASD twistor forms separately. In the following, we will write twistor forms as $\phi^{tw} = e^{\mu} I$ where $I$ is a unit SD (ASD) form, i.e. $I \wedge I = \pm 2 \mathrm{vol}_{g}$, and $\mu$ is a continuously differentiable function.

Twistor 2-forms are linked to existence of a locally conformally K\"{a}hler structure \cite{pontecorvo1992} (see also \cite[Lemma 2]{apostolov2003}).
\begin{lemma}{\cite{pontecorvo1992}}\label{lem:Pontecorvo}
    The SD (ASD) form $\phi^{tw} = e^{\mu} I$ is a twistor 2-form on $(M,g)$ if and only if the almost-Hermitian pair\footnote{That is, a Hermitian inner product $e^{-2 \mu} g$ and an almost complex structure $e^{-2 \mu} I$.} $(e^{-2 \mu} g\,,e^{-2 \mu} I)$ is K\"{a}hler.
\end{lemma}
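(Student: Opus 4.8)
The plan is to deduce the lemma from two ingredients — the conformal covariance of the twistor equation \cref{eq:Tw&K_twistor-definition}, and the rigidity of a self-dual $2$-form of constant norm on a Riemannian $4$-manifold — together with the tautology that the K\"ahler form of a K\"ahler structure is parallel.

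First I would fix the splitting $\phi^{tw}=e^{\mu}I$ with $I$ self-dual and $I\wedge I=2\,\mathrm{vol}_g$, so that $e^{\mu}$ is (up to a fixed constant) the pointwise norm of $\phi^{tw}$. I would then recall that \cref{eq:Tw&K_twistor-definition} is conformally covariant of weight $3$ for $2$-forms: if $\phi$ solves it for $g$, then $e^{3\lambda}\phi$ solves it for $e^{2\lambda}g$. This is verified by inserting into \cref{eq:Tw&K_twistor-definition} the standard conformal-change formulas for the Levi-Civita connection (a $\mathrm{d}\lambda$-correction of the Christoffel symbols), for $\mathrm{d}$ (conformally invariant), and for $\delta$ (shifted by a contraction with $\mathrm{d}\lambda$); the four-dimensional two-form case is particularly clean since the two structural coefficients in \cref{eq:Tw&K_twistor-definition} both equal $\tfrac13$. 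Taking $\lambda=-\mu$ and putting $\hat g:=e^{-2\mu}g$, $\hat\omega:=e^{-3\mu}\phi^{tw}=e^{-2\mu}I$, one gets: $\phi^{tw}$ is a twistor $2$-form on $(M,g)$ if and only if $\hat\omega$ is one on $(M,\hat g)$. Moreover $\hat\omega$ is still self-dual for $\hat g$ (the self-dual/anti-self-dual split is conformal) and has \emph{constant} norm, $\hat\omega\wedge\hat\omega=2\,\mathrm{vol}_{\hat g}$, precisely because $e^{\mu}$ was the norm of $\phi^{tw}$. Such an $\hat\omega$ determines pointwise a $\hat g$-orthogonal almost complex structure $\hat J$ with fundamental $2$-form $\hat\omega$; as an endomorphism $\hat J$ is just $I$ (the conformal factor cancels on raising an index), so the almost-Hermitian pair in the statement is $(\hat g,\hat J)$, and, $|\hat\omega|_{\hat g}$ being constant, ``$(\hat g,\hat J)$ K\"ahler'' is equivalent to $\hat\nabla\hat\omega=0$ ($\hat\nabla$ the Levi-Civita connection of $\hat g$).

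It then remains to show: a self-dual $2$-form $\hat\omega$ with $\hat\omega\wedge\hat\omega=2\,\mathrm{vol}_{\hat g}$ on $(M,\hat g)$ solves \cref{eq:Tw&K_twistor-definition} if and only if $\hat\nabla\hat\omega=0$. The direction ``$\Leftarrow$'' is immediate (a parallel form has vanishing $\mathrm{d}$ and $\delta$), and, combined with the previous paragraph, it already yields the ``K\"ahler $\Rightarrow$ twistor'' half of the lemma by rescaling back with $\lambda=\mu$. For ``$\Rightarrow$'', I would differentiate the constant $|\hat\omega|^{2}$: for every vector field $X$,
\begin{equation*}
0=\tfrac12\,X\!\left(|\hat\omega|^{2}\right)=\langle\hat\nabla_{X}\hat\omega,\hat\omega\rangle=\tfrac13\langle\mathrm{d}\hat\omega,\ X^{\flat}\wedge\hat\omega\rangle+\tfrac13\langle\delta\hat\omega,\ \iota_{X}\hat\omega\rangle,
\end{equation*}
using \cref{eq:Tw&K_twistor-definition} and the adjunctions $\langle\iota_{X}\alpha,\hat\omega\rangle=\langle\alpha,X^{\flat}\wedge\hat\omega\rangle$, $\langle X^{\flat}\wedge\beta,\hat\omega\rangle=\langle\beta,\iota_{X}\hat\omega\rangle$. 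In dimension four self-duality of $\hat\omega$ gives $\delta\hat\omega=\star\,\mathrm{d}\hat\omega$ (with the conventions of \cref{def:twistor-2form}) and the Hodge identity $\star(X^{\flat}\wedge\hat\omega)=\iota_{X}\hat\omega$; hence the two terms coincide and the relation reads $\langle\star\,\mathrm{d}\hat\omega,\ \iota_{X}\hat\omega\rangle=0$ for all $X$. Since $\hat\omega$ is non-degenerate, $X\mapsto\iota_{X}\hat\omega$ is a pointwise isomorphism onto $1$-forms, so $\star\,\mathrm{d}\hat\omega=0$, i.e. $\mathrm{d}\hat\omega=0$; then $\delta\hat\omega=\star\,\mathrm{d}\hat\omega=0$, and \cref{eq:Tw&K_twistor-definition} forces $\hat\nabla_{X}\hat\omega=0$ for all $X$.

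I expect the main obstacle to be purely the sign/convention bookkeeping in the conformal-covariance step: one has to carry the powers of $e^{\mu}$ and the signs in the Hodge star, the codifferential and the two adjunctions through consistently with the conventions set in \cref{def:twistor-2form}, so that the rescaled equation lands exactly on \cref{eq:Tw&K_twistor-definition} for $(\hat g,\hat\omega)$. If one prefers to bypass conformal covariance, the lemma can instead be obtained from a single direct computation: expand $\hat\nabla(e^{-2\mu}I)$ using the Christoffel correction of the conformal change and $\mathrm{d}(e^{\mu}I)=e^{\mu}(\mathrm{d}\mu\wedge I+\mathrm{d}I)$ together with the product rule for $\delta$; using that $I$ is self-dual of constant norm, both the K\"ahler condition $\hat\nabla(e^{-2\mu}I)=0$ and the twistor equation for $e^{\mu}I$ collapse to the same first-order identity $\nabla_{X}I=-\,\mathrm{d}\mu\wedge\iota_{X}I+X^{\flat}\wedge\iota_{\nabla\mu}I$, with the auxiliary relations $\mathrm{d}I=2\,\mathrm{d}\mu\wedge I$ and $\delta I=2\,\iota_{\nabla\mu}I$ (consequences of either form of the equation) bridging the two.
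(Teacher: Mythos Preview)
The paper does not give its own proof of \cref{lem:Pontecorvo}: the result is quoted from \cite{pontecorvo1992} (see also \cite[Lemma~2]{apostolov2003}) and used as input, so there is nothing in the paper to compare your argument against directly.

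Your two-step strategy --- conformal covariance of the twistor equation with weight~$3$ to reduce to a unit-norm self-dual form, followed by the rigidity statement that a constant-norm self-dual twistor $2$-form is parallel --- is the standard route and is correct. With the paper's sign convention (footnote to \cref{def:twistor-2form}, where $\delta\phi$ denotes the divergence $\nabla^m\phi_{m\cdot}$ rather than the Hodge codifferential), one indeed has $\delta\hat\omega=\star\,\mathrm{d}\hat\omega$ for self-dual $\hat\omega$, and then your norm-differentiation identity genuinely yields $\frac{2}{3}\langle\delta\hat\omega,\iota_X\hat\omega\rangle=0$ for all $X$, forcing $\delta\hat\omega=0$ and hence $\hat\nabla\hat\omega=0$. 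Your caveat about sign bookkeeping is well placed: with the opposite convention $\delta=-\star\,\mathrm{d}\star$ the two terms would cancel identically and the computation would be vacuous, so it is worth stating explicitly which convention you adopt before running the argument. Note also that the same rigidity step is what makes \cref{cor:alt-form-of-tw-eqn} (with $\mu=\mathrm{const}$) immediate, so your proof and the paper's subsequent corollary are mutually consistent.
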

This gives an alternative form of the twistor 2-form equation.
\begin{corollary}{\cite{apostolov2003}}\label{cor:alt-form-of-tw-eqn}
    The SD (ASD) form $\phi^{tw} = e^{\mu} I$ on $(M,g)$ is twistor if and only if the following equation holds
    \begin{gather}\label{eq:corollary-nabla_I}
        \nabla I = \gamma \otimes I^2 \,-\, \beta \otimes I^3\,,\\ \nonumber
        \beta = I^2 \mathrm{d}\mu\,, \qquad \gamma = I^3 \mathrm{d}\mu
    \end{gather}
    where $(I^1 = I , I^2, I^3)$ is an orthonormal basis of SD (ASD) forms which satisfies standard quaternionic relations: $I^1 I^2 = I^3$, etc.    
\end{corollary}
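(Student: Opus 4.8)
The plan is to obtain \cref{cor:alt-form-of-tw-eqn} as a direct consequence of \cref{lem:Pontecorvo}, by unwinding the K\"{a}hler condition for the conformally rescaled almost-Hermitian pair $(\tilde g,\tilde I):=(e^{-2\mu}g,\,e^{-2\mu}I)$ into a statement about the Levi-Civita connection $\nabla$ of $g$. Two preliminary observations do most of the conceptual work. First, as a skew endomorphism $\tilde I$ \emph{equals} $I$: rescaling the metric by $e^{-2\mu}$ and the $2$-form by $e^{-2\mu}$ leaves the associated endomorphism untouched, so ``$\tilde\nabla\tilde I=0$'' means ``$\tilde\nabla I=0$'' with $I$ regarded as the $(1,1)$-tensor. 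Second, since $\nabla$ commutes with the Hodge star it preserves the self-dual (anti-self-dual) bundle, and since $I^1=I$ has constant norm one has $\langle\nabla I^1,I^1\rangle=0$; hence there are $1$-forms $a,b$ with $\nabla I^1=a\otimes I^2+b\otimes I^3$, and the whole content of the corollary is the identification $a=\gamma=I^3\mathrm{d}\mu$, $b=-\beta=-I^2\mathrm{d}\mu$ (signs fixed by the frame conventions of \cref{eq:kahl_I1}).

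Next I would feed in the standard conformal change of connection. For $\tilde g=e^{2f}g$ one has $\tilde\nabla_XY=\nabla_XY+\mathrm{d}f(X)\,Y+\mathrm{d}f(Y)\,X-g(X,Y)\,\mathrm{grad}_g f$; applying this with $f=-\mu$, acting on $IY$, and subtracting $I\tilde\nabla_XY$ gives, after a two-line computation,
\begin{equation*}
(\tilde\nabla_XI)Y=(\nabla_XI)Y+g(X,IY)\,\mathrm{grad}_g\mu-\mathrm{d}\mu(IY)\,X+\mathrm{d}\mu(Y)\,IX-g(X,Y)\,I\,\mathrm{grad}_g\mu\,.
\end{equation*}
By \cref{lem:Pontecorvo}, $\phi^{tw}=e^{\mu}I$ is a twistor $2$-form exactly when the left-hand side vanishes, so setting it to zero expresses $\nabla_XI$ as an explicit algebraic combination of $\mathrm{d}\mu$, $X$ and $I$, first order in $\mu$ and zeroth order in the curvature — which is why the final form only involves $\beta,\gamma$ built from $\mathrm{d}\mu$.

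It then remains to recognise the right-hand side, as an endomorphism-valued $1$-form in $X$, as $\gamma(X)\,I^2-\beta(X)\,I^3$. This is a purely pointwise identity in the quaternionic algebra generated by $(I^1,I^2,I^3)$: using $I^1I^2=I^3$ and its cyclic companions one checks that $Y\mapsto g(X,IY)\,\mathrm{grad}_g\mu-\mathrm{d}\mu(IY)\,X+\mathrm{d}\mu(Y)\,IX-g(X,Y)\,I\,\mathrm{grad}_g\mu$ is precisely $(I^3\mathrm{d}\mu)(X)\,I^2-(I^2\mathrm{d}\mu)(X)\,I^3$, which is \cref{eq:corollary-nabla_I}; the converse direction is the same chain of equalities read backwards, closed off with \cref{lem:Pontecorvo}. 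I expect this last frame identity to be the only place that needs genuine care: one must keep track that $I$ is \emph{self-dual} while the K\"{a}hler form $J$ is \emph{anti-self-dual}, so the $I^a$ here are not the complex structures of $g$, and the signs in the quaternionic relations and in the actions of the $I^a$ on covectors versus vectors must be handled consistently. One could instead bypass \cref{lem:Pontecorvo} and verify the equivalence by substituting $\nabla\phi^{tw}=\mathrm{d}\mu\otimes\phi^{tw}+e^{\mu}(a\otimes I^2+b\otimes I^3)$ into \cref{eq:Tw&K_twistor-definition} and matching the $\Lambda^{3}\oplus\Lambda^{1}$ part against the Cartan-product part, but that merely re-derives \cref{lem:Pontecorvo} along the way and is no shorter.
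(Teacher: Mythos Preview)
Your approach is correct and matches exactly how the paper positions the result: the paper states \cref{cor:alt-form-of-tw-eqn} as an immediate corollary of \cref{lem:Pontecorvo} (with a citation to \cite{apostolov2003}) and gives no independent proof, so your derivation via the conformal change of connection and the pointwise quaternionic identity is precisely the intended argument. The only caveat is the one you already flag: when you pass from $\tilde\nabla_XI=0$ to $(\nabla_XI)Y$ you pick up a global sign relative to the four-term expression you wrote, so the algebraic identification should read $(\nabla_XI)Y=-\big(g(X,IY)\,\mathrm{grad}_g\mu-\mathrm{d}\mu(IY)X+\mathrm{d}\mu(Y)IX-g(X,Y)I\,\mathrm{grad}_g\mu\big)$ before matching with $\gamma(X)I^2-\beta(X)I^3$; this is exactly the sign-tracking you anticipated and does not affect the structure of the argument.
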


Besides Hodge duality, K\"{a}hler geometries have another natural involution. One can define a $J-$conjugation $\mathrm{i}_J \phi (X,Y) := \phi(JX, JY)$ which further decomposes ASD subspace\footnote{SD subspace does not decompose because all SD forms are J-invariant by construction.} into a direct sum of a one-dimensional space generated by the K\"{a}hler form and its orthogonal complement, anti-J-invariant 2-forms. The K\"{a}hler form itself is a twistor form, and if a geometry admits a twistor form belonging to the latter subspace, then it is hyperk\"{a}hler.
\begin{proposition}
    Let $(M,g,J)$ a K\"{a}hler surface with ASD K\"{a}hler form $J$. Then the geometry admits an ASD twistor 2-form which is not a multiple of K\"{a}hler form if and only if the geometry is hyperk\"{a}hler.
\end{proposition}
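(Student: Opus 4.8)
The plan is to reduce the statement to the fact that a K\"ahler surface is hyperk\"ahler precisely when it carries a nonzero parallel anti-$J$-invariant $2$-form. The ``if'' direction is then immediate: if $(M,g,J)$ is hyperk\"ahler then, $J$ being parallel and ASD, it lies on the $2$-sphere of parallel $g$-orthogonal complex structures and may be completed to a compatible hyperk\"ahler triple $(J_1,J_2,J_3)$ with $J_1=J$; the K\"ahler form $\omega_2$ of $J_2$ is parallel, hence a twistor $2$-form by \cref{def:twistor-2form}, it is ASD since $J_2$ induces the same orientation, and at each point it is orthogonal to $J=\omega_1$ and nonzero, so it is not a (functional) multiple of $J$.

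For the converse, let $\phi^{tw}$ be the given ASD twistor $2$-form. Since $\nabla J=0$, the involution $\mathrm{i}_J$ commutes with $\nabla$ and hence preserves the space of twistor $2$-forms, so $\psi:=\tfrac12\bigl(\phi^{tw}-\mathrm{i}_J\phi^{tw}\bigr)$ is again an ASD twistor $2$-form, now lying in the anti-$J$-invariant summand of $\Lambda^-M$; because the $\mathrm{i}_J$-invariant ASD $2$-forms are exactly the functional multiples of $J$, the hypothesis that $\phi^{tw}$ is not such a multiple forces $\psi\not\equiv 0$. The heart of the argument is to show that $\psi$ is parallel. Write $\psi=\alpha+\bar\alpha$ with $\alpha$ of type $(2,0)$. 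On a K\"ahler manifold the Levi-Civita connection preserves the Dolbeault bigrading, so $\nabla_X\psi\in\Omega^{2,0}\oplus\Omega^{0,2}$ has vanishing $(1,1)$-component; demanding that the $(1,1)$-component of the twistor equation \cref{eq:Tw&K_twistor-definition} vanish, and letting $X$ range over $(1,0)$-vectors, forces $\iota_X\bar\partial\alpha=-X^\flat\wedge\delta\alpha$ for all such $X$. A comparison of types on the two sides --- the contracted $(1,0)$-slot of $\bar\partial\alpha\in\Omega^{2,1}$ has no match on the right --- then yields $\bar\partial\alpha=0$ and $\delta\alpha=0$, whence $\mathrm{d}\psi=0$ and $\delta\psi=0$, and the twistor equation itself now gives $\nabla\psi=0$.

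To conclude, a nonzero parallel $2$-form has constant norm, hence is nowhere vanishing; a suitable constant rescaling of $\psi$ is then a parallel $2$-form orthogonal to $J$ and of the same norm, and by the standard argument $J$ and this rescaling generate a hyperk\"ahler triple, so $(M,g)$ is hyperk\"ahler with first complex structure $J$. The main obstacle is the parallelism of $\psi$; what makes it go through is that it is precisely the $(1,1)$-part of the twistor equation, together with the bigraded character of the Levi-Civita connection, that pins $\alpha$ down as a parallel holomorphic $(2,0)$-form, while everything else is formal. A more synthetic route for this step is available through \cref{lem:Pontecorvo} --- $\phi^{tw}$ equips the conformal class $[g]$ with a second K\"ahler metric of the same orientation, and comparing it with $(g,J)$ again produces the parallel anti-$J$-invariant form --- but one still has to carry out essentially this computation.
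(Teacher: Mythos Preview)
Your proof is correct, but it takes a genuinely different route from the paper's. The paper works in an orthonormal ASD frame $(J^1=J,\,J^2=J^{tw},\,J^3)$: on any K\"ahler surface the induced connection on $\Lambda^-$ reduces to $\nabla J^2 = P\otimes J^3$ with $P$ the Ricci potential, and comparing this directly with the form of $\nabla I$ imposed by \cref{cor:alt-form-of-tw-eqn} forces $\mathrm{d}\mu=0$ and then $P=0$, i.e.\ Ricci-flatness. You instead pass through the Dolbeault decomposition, writing the anti-$J$-invariant piece as $\alpha+\bar\alpha$ with $\alpha\in\Omega^{2,0}$ and reading off $\bar\partial\alpha=0$, $\delta\alpha=0$ from the $(1,1)$-component of \cref{eq:Tw&K_twistor-definition}; this exhibits $\alpha$ directly as a parallel holomorphic symplectic form, the textbook hallmark of a hyperk\"ahler surface. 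The paper's argument is two lines and stays within the frame formalism used throughout; yours is self-contained (no appeal to \cref{cor:alt-form-of-tw-eqn}) and perhaps more transparent to readers coming from complex geometry. One remark: your sentence ``the contracted $(1,0)$-slot of $\bar\partial\alpha$ has no match on the right'' is a little cryptic --- what actually happens is that for $X=\partial_{z^i}$ the left-hand side $\iota_X\bar\partial\alpha$ has no $\mathrm{d} z^i\wedge\mathrm{d}\bar z^{j}$ component, while $X^\flat\wedge\delta\alpha$ generically does; varying $i$ then kills $\delta\alpha$ and hence $\bar\partial\alpha$. The computation is routine, but a reader might appreciate one more line spelling this out.
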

\begin{proof}
    Write an anti-J-invariant twistor 2-form as $\phi^{tw} = e^\mu J^{tw}$, $J^{tw} \wedge J^{tw} = - 2 \mathrm{vol}_{g}$, and pick an orthonormal basis of ASD forms $(J^1 = J, J^2 = J^{tw}, J^3=J^1 J^2)$. By the property of K\"{a}hler geometry $\nabla J^{tw} = P \otimes J^3$ where $P$ is the Ricci potential, and by \cref{cor:alt-form-of-tw-eqn} it follows that $P=0$, i.e. the geometry is Ricci-flat.

    Conversely, since any K\"{a}hler form is a twistor form, a hyperk\"{a}hler geometry admits multiple twistor forms.
\end{proof}

Finally, for toric K\"{a}hler geometries, the inversion of angles \cref{eq:kahl_geometry_angular-involution} is the third involution that commutes with a connection (and other involutions), which for toric invariant twistor forms allows to consider the case $\phi^{tw} \propto I^2$ separately. One can show that geometries admitting such a SD twistor form cannot be smooth.
\begin{proposition}\label{prop:no-I2}
    Let $(M,g,J)$ be a smooth toric K\"{a}hler surface. Then it does not admit a toric invariant SD twistor 2-form that belongs to the positive eigenspace of $\mathrm{i}_\varphi$\,.
\end{proposition}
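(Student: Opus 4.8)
My plan is to turn the twistor form into a second, conformally related K\"ahler form, exploit its closedness to rigidly fix the conformal factor, and then derive a contradiction with the smooth degeneration of the torus orbits at the axis. Intuitively, one must make precise why ``$\phi^{tw}=e^{\mu}I^{2}$ cannot be smooth'': the unit form $I^{2}$ of \cref{eq:kahl_I2} blows up along the axis (its $\mathrm{d}x_{1}\wedge\mathrm{d}x_{2}$ piece has a $G_{gr}^{-1/2}$ factor), and the twistor equation does not leave $e^{\mu}$ enough freedom to absorb this.

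\emph{Setup.} Since $\phi^{tw}$ is self-dual and lies in the positive $\mathrm{i}_\varphi$-eigenspace (spanned among self-dual forms by $I^{2}$), write $\phi^{tw}=e^{\mu}I^{2}$; as $\phi^{tw}$ and $g$ are torus-invariant, $e^{2\mu}\propto|\phi^{tw}|^{2}$ is a smooth function of the moment maps $x_{i}$ alone. By \cref{lem:Pontecorvo} the rescaled pair $(e^{-2\mu}g,\,e^{-2\mu}I^{2})$ is K\"ahler, so $\omega^{-}:=e^{-2\mu}I^{2}$ is a closed, torus-invariant $2$-form.

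\emph{Step 1 (rigidity of the conformal factor).} Contract $\omega^{-}$ with $m_{1}=\partial_{\varphi^{1}}$: from \cref{eq:kahl_I2}, $\iota_{m_{1}}\omega^{-}=e^{-2\mu}G_{gr}^{1/2}\,\mathrm{d}\varphi^{2}$, and this $1$-form is closed because $\omega^{-}$ is closed and $m_{1}$-invariant. Since $e^{-2\mu}G_{gr}^{1/2}$ depends on $x_{i}$ only, while $\mathrm{d}x_{1}\wedge\mathrm{d}\varphi^{2}$ and $\mathrm{d}x_{2}\wedge\mathrm{d}\varphi^{2}$ are independent, $\mathrm{d}\bigl(e^{-2\mu}G_{gr}^{1/2}\bigr)\wedge\mathrm{d}\varphi^{2}=0$ forces $e^{-2\mu}G_{gr}^{1/2}$ to be a positive constant $c$ (constant on the connected manifold $M$ minus the axis). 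Hence $G_{gr}=c^{2}e^{4\mu}\propto|\phi^{tw}|^{4}$, so $\sqrt{G_{gr}}=\bigl(\det g(m_{i},m_{j})\bigr)^{1/2}$ is proportional to $|\phi^{tw}|^{2}$ and therefore a smooth function on all of $M$. (One runs this on the dense open set where $\phi^{tw}\neq0$ and extends the identity by continuity.)

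\emph{Step 2 (clash with smoothness at the axis).} By hypothesis the axis is non-empty; fix a component cut out by a primitive $v=v^{i}m_{i}$, with defining function $\ell_{v}$ as in \cref{eq:tor_axisline}. Using \cref{lem:tor_axis} and \cref{eq:kahl_geometry-h-Gij-J} in a torus basis with $v$ as first generator, $G^{11}=\tfrac1{2\ell_{v}}+(\text{smooth})$ while $G^{12},G^{22}$ are smooth, which gives $G_{gr}=2\ell_{v}/W$ with $W$ smooth and positive near the axis, and $|\mathrm{d}\ell_{v}|_{g}^{2}=G_{11}=2\ell_{v}\bigl(1+O(\ell_{v})\bigr)$. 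If $\sqrt{G_{gr}}$ were smooth then so would be $q:=\sqrt{\tfrac12 W}\,\sqrt{G_{gr}}=\sqrt{\ell_{v}}$; but then $|\mathrm{d}q|_{g}^{2}=|\mathrm{d}\ell_{v}|_{g}^{2}/(4\ell_{v})\to\tfrac12$ as one approaches the axis, whereas a smooth function whose zero set is the codimension-two axis must have vanishing differential there, so $|\mathrm{d}q|^{2}\to0$. This contradiction rules out $\phi^{tw}$.

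\emph{The hard part} is Step 1: seeing that the conformally K\"ahler (twistor) condition, through $\mathrm{d}\omega^{-}=0$, rigidly forces $e^{4\mu}\propto G_{gr}$. After that, Step 2 is a short computation with the boundary behaviour of the symplectic potential from \cref{lem:tor_axis}, the only genuine content being that $\det g(m_{i},m_{j})$ vanishes to order exactly two at a smooth axis and hence is not the square of a smooth function; the one technical nuisance is handling the (nowhere dense) zero set of $\phi^{tw}$, which is why the identity $|\phi^{tw}|^{4}\propto G_{gr}$ is obtained first on an open dense set and then by continuity.
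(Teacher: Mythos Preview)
Your proof is correct and takes a genuinely different route from the paper's. Both arguments begin identically: closedness of $e^{-2\mu}I^{2}$ forces $e^{-2\mu}G_{gr}^{1/2}=\text{const}$, i.e.\ $\mu=\tfrac14\log G_{gr}+\text{const}$ (the paper states this in one line; your contraction with $m_{1}$ is an equivalent derivation). The divergence is in how the contradiction at the axis is reached. The paper invokes the \emph{full} twistor equation via \cref{cor:alt-form-of-tw-eqn}: it writes down the connection one-forms $\tilde\alpha,\tilde\gamma$ on the bundle of SD forms in the explicit basis \cref{eq:kahl_I1,eq:kahl_I2,eq:kahl_I3}, notes that the twistor condition requires $\tilde\alpha=\tilde\gamma=\mathrm{d}\mu$, and then shows from \cref{lem:tor_axis} that $\tilde\alpha=\tfrac{1}{2x_{1}}\mathrm{d}x_{1}+O(1)$ while $\tilde\gamma=O(1)$, an immediate clash. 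You instead extract the contradiction already from the rigidity $|\phi^{tw}|^{2}\propto\sqrt{G_{gr}}$ together with smoothness of $|\phi^{tw}|^{2}$: since $G_{gr}=2\ell_{v}/W$ with $W$ smooth and positive (this is a straightforward consequence of \cref{lem:tor_axis}), smoothness of $\sqrt{G_{gr}}$ would force $\sqrt{\ell_{v}}$ to be smooth, and the gradient computation $|\mathrm{d}\sqrt{\ell_{v}}|^{2}\to\tfrac12$ rules this out. Your approach is more elementary---it never touches the explicit connection on $\Lambda^{+}$ nor the frames $I^{1},I^{3}$---and isolates the obstruction as a purely topological/smoothness phenomenon (the norm-squared of a smooth form cannot behave like a radial distance from a codimension-two locus). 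The paper's approach, by contrast, locates the failure inside the twistor PDE itself, which is closer in spirit to the later analysis of the negative-eigenspace case.
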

\begin{proof}
    Let $\phi^{tw} = e^\mu I^2$ be a toric invariant twistor form. Firstly, by \cref{lem:Pontecorvo} $e^{-2 \mu} I^2$ is a closed form, hence, using \cref{eq:kahl_I2},
    \begin{equation}
        \mu = \frac{1}{4} \mathrm{log}\, G_{gr} + \textrm{const}.
    \end{equation} 
    Furthermore, adapting the \cref{cor:alt-form-of-tw-eqn}, it follows that
    \begin{gather}\label{eq:kahl-toric-twistor-I2}
        \nabla I^2 = \alpha \otimes I^3 \,-\, \gamma \otimes I^1\,,\\ \nonumber
        \alpha = I^1 \mathrm{d}\mu\,, \qquad \gamma = I^3 \mathrm{d}\mu\,.
    \end{gather}
    Notice that this must hold for any choice of local basis forms $I^1, I^3$ which are unit and satisfy the quaternionic relations, and it is convenient for us to check twistor conditions in the basis \cref{eq:kahl_I1,eq:kahl_I2,eq:kahl_I3}. 

    Secondly, observe that 1-forms $\alpha, \gamma$ as well as $\beta$ define a connection on the bundle of SD forms
    \begin{gather}
        \nabla I^1 = \gamma \otimes I^2 \,-\, \beta \otimes I^3\,,\\ \nonumber
        \nabla I^3 = \beta \otimes I^1 \,-\, \alpha \otimes I^2\,\hphantom{,}
    \end{gather}
    where we have used their orthonormality, and for any toric K\"{a}hler geometry in the symplectic chart, they read as
    \begin{align}\label{eq:kahl-alpha-tilde}
        \tilde{\alpha} &:= -I^1 \alpha = \partial^1 \mathrm{log}\, \left(G_{gr}^{1/2} G^{11}\right) \mathrm{d} x_1 + \tilde{\alpha}^2 \mathrm{d} x_2\,,\\\label{eq:kahl-gamma-tilde}
        \tilde{\gamma} &:= -I^3 \gamma = - \partial^1 \mathrm{log}\, \left(G_{gr} \ G^{11}\right) \mathrm{d} x_1 + \tilde{\gamma}^2 \mathrm{d} x_2
    \end{align}
    where $\tilde{\alpha}^2, \tilde{\gamma}^2$ are some functions irrelevant to discussion. 
    
    To obtain a contradiction, consider a smooth toric K\"{a}hler geometry in the neighbourhood of the axis given by vanishing of $v:= v^i m_i$. By \cref{lem:tor_axis} the axes correspond to straight lines in symplectic coordinates, and by a suitable $GL(2, \mathbb{Z})$ transformation we can always arrange $v := m_1 = \partial_{\varphi^1}$ and define new symplectic coordinates so that the axis corresponds to the line $\ell_v(x) = x_1 =0$. As it was further shown in this lemma, to the leading order the geometry around the axis is
    \begin{equation}
        G^{11} = \frac{1}{2 x_1} + O(1)\,, \qquad G^{12},\,G^{22} = O(1)\,, \qquad G_{gr} = 2 x_1 + O(x_1^2)\,.
    \end{equation}

    Substituting into the twistor 2-form condition $\tilde{\alpha} = \tilde{\gamma} = \mathrm{d} \mu$ and using \cref{eq:kahl-alpha-tilde,eq:kahl-gamma-tilde}, we arrive at the contradiction
    \begin{equation}
        \tilde{\alpha} = \frac{1}{2 x_1}\mathrm{d} x_1 + O(1)\,, \qquad \tilde{\gamma} = O(1)\,.
    \end{equation}
    
\end{proof}

The case of twistor forms in the positive eigenspace of $\mathrm{i}_\varphi$ is, therefore, excluded from our classification, and we consider the case of a negative eigenspace in the next section.

\section{Toric K\"{a}hler geometries with twistor 2-forms} 
\label{sec:toric_kahl}

In this section we solve the twistor 2-form equation on toric K\"{a}hler geometries by providing an alternative construction of ambitoric geometries \cite{Apostolov:2013oza} an appropriate chart. 
We will show that the geometries fall into separate families each containing two arbitrary functions $F, G$ of a single variable. We are interested in classifying them up to \textit{family automorphisms}, that is up to diffeomorphisms and a change of metric functions $F,G \rightarrow \tilde{F}, \tilde{G}$.

\subsection{Construction of a diagonalised chart}
\label{sec:construction_of_diag_chart}

1. Let $(M, g, J)$ be a toric K\"{a}hler surface, and $x,y$ are moment maps wrt commuting Killing vector fields $m_\Psi=\frac{\partial}{\partial \Psi},$ $m_\Phi = \frac{\partial}{\partial \Phi}$ whose periods are left arbitrary. The K\"{a}hler form is then 
\begin{equation}\label{eq:constr_J}
   J = \mathrm{d} \left(x \mathrm{d} \Psi + y \mathrm{d} \Phi\right)\,. 
\end{equation}
Let the geometry admit a twistor 2-form $\phi^{tw} = e^\mu I$ invariant under the toric symmetry, and which belongs to the negative eigenspace of $\mathrm{i}_\varphi$, that is $\iota_{m_\Psi}\iota_{m_\Phi} \phi^{tw} = 0$, where $I$ is the unit SD form $I \wedge I = 2 \mathrm{vol}_{g}$, and $\mu$ is an unknown function. We can further write it as
\begin{equation}\label{eq:constr_I+J}
    I = - J + \omega_\Psi \wedge \mathrm{d} \Psi +  \omega_\Phi \wedge \mathrm{d} \Phi
\end{equation}
for some non-zero 1-forms $\omega_\Psi, \omega_\Phi$ normal to the orbits of $m_\Psi, m_\Phi$. Notice that $I+J$ is necessarily degenerate\footnote{We use the orthogonality of SD and ASD 2-forms: $\star(I \wedge J) = \star(\star I \wedge J) = \star(I \wedge \star J) = -\star(I \wedge J)=0$.} $$(I+J) \wedge (I+J) = I \wedge I + J \wedge J = 2 \mathrm{vol}_{g} - 2 \mathrm{vol}_{g}=0\,,$$ hence
\begin{equation}
    \omega_\Psi \wedge \omega_\Phi = 0\,,
\end{equation}
and we can write them as $\omega_\Psi = k_\Psi \mathrm{d} \xi$\,, $\omega_\Phi = k_\Phi \mathrm{d} \xi$ for some functions $k_\Psi,k_\Phi,\xi$ defined up to the reparameterisations of the latter. 

2. Next, consider the 2 by 2 block of symplectic coordinates inside the metric. Recall, that we can\footnote{See, for example, \cite[Appendix C]{Harmark:2004rm}.} always diagonalise a two-dimensional Riemannian metric wrt any given function being one of the coordinates, provided that it has a non-vanishing gradient. Since $\mathrm{d} \xi \neq 0$, it is convenient to pick $\xi$ as one, and denote the second coordinate as $\eta$. 

3. Let us introduce the frame. Without loss of generality, we can take $E^\xi \propto \mathrm{d} \xi$ and choose the K\"{a}hler form as $J= E^\xi \wedge E^\Psi + E^\eta \wedge E^\Phi$. This leaves the $SO(2)$ freedom in the choice of $E^{\eta}, E^{\phi}$ which we can use to set $\iota_{\partial_\phi} E^{\eta} = 0$. Then the rest of the frame functions are fixed from checking the scalar products and the K\"{a}hler form \cref{eq:constr_J}. In total, we can parameterise the geometry by two arbitrary functions $F=F(\xi,\eta)$, $G=G(\xi,\eta)$, and the derivatives of the moment maps wrt new coordinates $x_\xi, x_\eta, y_\xi, y_\eta$\,:
\begin{equation}\label{eq:constr_frame}
    \begin{aligned}
        E^\xi  = \dfrac{e^{\mu/2}}{\sqrt{F}}\mathrm{d} \xi \,, \quad E^\Psi = \dfrac{\sqrt{F}}{e^{\mu/2}} \left(x_\xi \mathrm{d} \Psi\,+\, y_\xi \mathrm{d} \Phi\right)\,,\\
        E^\eta = \dfrac{e^{\mu/2}}{\sqrt{G}}\mathrm{d} \eta\,, \quad E^\Phi = \dfrac{\sqrt{G}}{e^{\mu/2}} \left(x_\eta \mathrm{d} \Psi\,+\, y_\eta \mathrm{d} \Phi\right)\,.
    \end{aligned}
\end{equation}
Finally, we pick the orientation $E^\xi \wedge E^\eta \wedge E^\Psi \wedge E^\Phi = \mathrm{vol}_{g}.$ The metric is then
\begin{equation}\label{eq:constr_metric}
    \mathrm{d} s^2 = \dfrac{e^\mu}{F} \mathrm{d} \xi^2 \,+\, \dfrac{e^\mu}{G} \mathrm{d} \eta^2 \,+\, \dfrac{F}{e^\mu} \left(x_\xi \mathrm{d} \Psi\,+\, y_\xi \mathrm{d} \Phi \right)^2 \,+\,\dfrac{G}{e^\mu} \left(x_\eta \mathrm{d} \Psi\,+\, y_\eta \mathrm{d} \Phi\right)^2\,.
\end{equation}
We can choose an orthonormal basis of SD forms
\begin{equation}\label{eq:constr_SD_basis}
    \begin{aligned}
        I^1 = E^\xi \wedge E^\Psi - E^\eta \wedge E^\Phi\,,\\
        I^2 = E^\xi \wedge E^\eta + E^\Psi \wedge E^\Phi\,,\\
        I^3 = E^\xi \wedge E^\Phi + E^\eta \wedge E^\Psi\,,
    \end{aligned}
\end{equation}
whose almost complex structures satisfy the standard quaternionic relations: $I^1\,I^2= I^3$, etc.

4. Next, we must check that $I$ is the unit self-dual 2-form. The equation \cref{eq:constr_I+J} is then
\begin{equation}
    I = - E^\xi \wedge E^\Psi - E^\eta \wedge E^\Phi + \dfrac{\sqrt{F}}{e^{\mu/2}} E^\xi \wedge \left(k_\Psi \mathrm{d} \Psi + k_\Phi \mathrm{d} \Phi\right)\,.
\end{equation}
Comparing to the basis \cref{eq:constr_SD_basis}, we immediately have $k_\Psi = 2 x_\xi$, $k_\Phi = 2 y_\xi$, and
\begin{equation}
    I = I^1 = x_\xi \mathrm{d} \xi \wedge \mathrm{d} \Psi + y_\xi \mathrm{d} \xi \wedge \mathrm{d} \Phi - x_\eta \mathrm{d} \eta \wedge \mathrm{d} \Psi - y_\eta \mathrm{d} \eta \wedge \mathrm{d} \Phi\,.
\end{equation}

5. We now turn to the twistor 2-form equation. First of all, by \cref{lem:Pontecorvo} it follows that $e^{-2 \mu} I$ is a closed form. This is equivalent to
\begin{equation}\label{eq:constr_xy_xh}
    \begin{aligned}
        x_{\xi \eta} = \mu_\xi x_\eta + \mu_\eta x_\xi\,,\\
        y_{\xi \eta} = \mu_\xi y_\eta + \mu_\eta y_\xi\,.
    \end{aligned}
\end{equation}

The remaining components of the twistor equation can then be written as
\begin{equation}\label{eq:constr_twist_xx_hh}
    \begin{aligned}
        x_{\xi \xi}   = x_\xi  \partial_\xi  \mathrm{log}\, y_\xi  + \dfrac{F_\eta G}{F^2 y_\xi} \sqrt{\mathrm{det} g}\,,\\
        x_{\eta \eta} = x_\eta \partial_\eta \mathrm{log}\, y_\eta - \dfrac{F G_\xi}{G^2 y_\eta} \sqrt{\mathrm{det} g}\,\hphantom{,}
    \end{aligned}
\end{equation}
where $\sqrt{\mathrm{det} g} = x_\xi y_\eta - x_\eta y_\xi$ is the square root of the determinant of the metric \cref{eq:constr_metric}.

6. We will now check that $(M,g,J)$ is a K\"{a}hler geometry. By construction $J$ is a closed unit ASD form. To check that it is an integrable complex structure, it is sufficient to show that the algebra of holomorphic fields is closed under standard commutator. Let $\chi_0, \chi_1$ be a basis of $(1,0)$-forms, i.e. $J \chi_{0,1} = i \chi_{0,1}$\,. We then require that $[\chi_0, \chi_1]$ is also a $(1,0)$-form. We choose
\begin{equation}
    \chi_0 = E_\xi - i J E_\xi = E_\xi + i E_\Psi\,, \qquad \chi_1 = E_\eta - i J E_\eta = E_\eta + i E_\Phi\,
\end{equation}
where $E_\xi, E_\eta, E_\Psi, E_\Phi$ are basis vector fields dual to the frame \cref{eq:constr_frame}. We find that
\begin{equation}
    [\chi_0,\chi_1] = \chi^\xi E_\xi + \chi^\eta E_\eta + \chi^\Psi E_\Psi + \chi^\Phi E_\Phi\,
\end{equation}
where 
\begin{equation}
    \chi^{\xi} = \dfrac{e^{-\mu/2} \sqrt{G}}{2F} (F \mu_\eta - F_\eta)\,,\qquad \chi^\eta = \dfrac{e^{-\mu/2} \sqrt{F}}{2 G} (G_\xi - G \mu_\xi)\,,
\end{equation}
and the other two components, after substituting the higher derivatives from the twistor equation \cref{eq:constr_xy_xh,eq:constr_twist_xx_hh},
\begin{equation}
    \chi^{\Psi} = i \,\dfrac{e^{-\mu/2} \sqrt{G}}{2F} (F \mu_\eta + 3 F_\eta)\,,\qquad \chi^\Phi = - i \dfrac{e^{-\mu/2} \sqrt{F}}{2 G} (3 G_\xi + G \mu_\xi)\,.
\end{equation}

The integrability condition $[\chi_0, \chi_1] = i J[\chi_0, \chi_1]$ in terms of components reads as
\begin{equation}
    \chi^\Psi = i \chi^\xi\,, \qquad \chi^\Phi = i \chi^\eta\,,
\end{equation}
from which immediately follows that the functions $F,G$ are functions of a single variable only
\begin{equation}\label{eq:constr_Kahl-sol}
    F_\eta = G_\xi = 0\,.
\end{equation}

7. We now turn back to the twistor equation, which we have so far solved algebraically for the higher derivatives. Using the solution of the K\"{a}hler condition \cref{eq:constr_Kahl-sol}, the system \cref{eq:constr_twist_xx_hh} simplifies to
\begin{equation}
    \frac{x_{\xi \xi}}{x_\xi} = \frac{y_{\xi \xi}}{y_\xi}\,,\qquad \frac{x_{\eta \eta}}{x_\eta} = \frac{y_{\eta \eta }}{y_\eta}\,,
\end{equation}
which is solved by
\begin{equation}\label{eq:constr_x=c1y+c2=c3y+c4}
    y = c_1(\xi) x + c_2(\xi) = c_3(\eta) x + c_4(\eta)\,,
\end{equation}
for $c_{1,2,3,4}$ arbitrary functions. The moment maps $x,y$ are then
\begin{equation} \label{eq:constr_xy-c1234_sol}
    x = \dfrac{c_4 - c_2}{c_1 - c_3}\,, \qquad y = \dfrac{c_1 c_4 - c_2 c_3}{c_1 - c_3}\,.
\end{equation}

This section can be summed up in a following lemma.
\begin{lemma}\label{lem:diag_chart}
    Let $(M,g,J)$ be a toric K\"{a}hler geometry with axial Killing fields $m_\Psi, m_\phi$, admitting a toric-invariant twistor 2-form $\phi^{tw}$ which belongs to the negative eigenspace of the inversion of angles $\mathrm{i}_\varphi$, i.e. $\iota_{m_\Psi} \iota_{m_\Phi} \phi^{tw} = 0$. Then the metric can be locally written as
    \begin{equation}
        \mathrm{d} s^2 = \dfrac{e^\mu}{F} \mathrm{d} \xi^2 \,+\, \dfrac{e^\mu}{G} \mathrm{d} \eta^2 \,+\, \dfrac{F}{e^\mu} \left(x_\xi \mathrm{d} \Psi\,+\, y_\xi \mathrm{d} \Phi \right)^2 \,+\,\dfrac{G}{e^\mu} \left(x_\eta \mathrm{d} \Psi\,+\, y_\eta \mathrm{d} \Phi\right)^2
    \end{equation}
    where $F=F(\xi)$, $G = G(\eta)$, and $x,y$ are moment maps wrt axial Killing fields $m_\Psi=\frac{\partial}{\partial \Psi},$ $m_\Phi = \frac{\partial}{\partial \Psi}$ given by
    \begin{equation}
        x = \dfrac{c_4(\eta) - c_2(\xi)}{c_1(\xi) - c_3(\eta)}\,, \qquad y = \dfrac{c_1(\xi) c_4(\eta) - c_2(\xi) c_3(\eta)}{c_1(\xi) - c_3(\eta)}\,.
    \end{equation}
    The K\"{a}hler form is given by
    \begin{equation}
        J = \mathrm{d} \left(x \mathrm{d} \Psi \,+\, y \mathrm{d} \Phi\right)\,,
    \end{equation}
    and the twistor 2-form is (up to a constant factor)
    \begin{equation}\label{eq:constr_tw-form}
        \phi^{tw} = e^{\mu} I^1 = e^{\mu} \left(x_\xi \mathrm{d} \xi \wedge \mathrm{d} \Psi + y_\xi \mathrm{d} \xi \wedge \mathrm{d} \Phi - x_\eta \mathrm{d} \eta \wedge \mathrm{d} \Psi - y_\eta \mathrm{d} \eta \wedge \mathrm{d} \Phi\right)\,.
    \end{equation}
    The pre-factor $\mu$, as well as the functions $c_{1,2,3,4}$ are constrained by the system
    \begin{equation}\label{eq:constr_lem_final-constr}
        \begin{aligned}
            x_{\xi \eta} = \mu_\xi x_\eta + \mu_\eta x_\xi\,,\\
            y_{\xi \eta} = \mu_\xi y_\eta + \mu_\eta y_\xi\,.
        \end{aligned}
    \end{equation}
   
\end{lemma}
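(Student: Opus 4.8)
The plan is to turn the seven-step construction of \cref{sec:construction_of_diag_chart} into a proof that the claimed chart always exists locally around a generic point. First I would exploit the algebraic interplay between the self-dual form $I$ underlying $\phi^{tw} = e^\mu I$ and the anti-self-dual K\"ahler form $J$: since the wedge of an SD and an ASD $2$-form vanishes, $I+J$ is degenerate, and because $\phi^{tw}$ is torus-invariant and lies in the negative eigenspace of $\mathrm{i}_\varphi$ one may write $I = -J + \omega_\Psi\wedge\mathrm{d}\Psi + \omega_\Phi\wedge\mathrm{d}\Phi$ as in \cref{eq:constr_I+J}. The degeneracy then forces $\omega_\Psi\wedge\omega_\Phi = 0$, so both are proportional to a common closed $1$-form, which we take to be $\mathrm{d}\xi$; this singles out the transverse coordinate $\xi$, with $\mathrm{d}\xi\neq 0$ by non-triviality of $\phi^{tw}$.

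Next I would invoke the standard fact that a two-dimensional Riemannian metric can be diagonalised with respect to any prescribed coordinate of non-vanishing gradient, and apply it to the transverse $(x_1,x_2)$-block with $\xi$ as one coordinate, producing the second transverse coordinate $\eta$. I would then fix an orthonormal coframe adapted to this splitting and to $J = E^\xi\wedge E^\Psi + E^\eta\wedge E^\Phi$, using the residual $SO(2)$ gauge to set $\iota_{\partial_\Phi}E^\eta = 0$; the remaining freedom is absorbed into two functions $F,G$ together with the derivatives $x_\xi,x_\eta,y_\xi,y_\eta$ of the moment maps, which gives the metric \cref{eq:constr_metric}. Imposing that $I$ is the unit SD form then pins down $k_\Psi = 2x_\xi$, $k_\Phi = 2y_\xi$ and identifies $\phi^{tw} = e^\mu I^1$ with $I^1$ from the basis \cref{eq:constr_SD_basis}.

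The twistor equation \cref{eq:Tw&K_twistor-definition} I would process in two stages via \cref{lem:Pontecorvo} and \cref{cor:alt-form-of-tw-eqn}: closure of $e^{-2\mu}I$ yields the cross-derivative relations \cref{eq:constr_xy_xh}, and the remaining components give \cref{eq:constr_twist_xx_hh}, which express the pure second derivatives $x_{\xi\xi}$, $x_{\eta\eta}$ algebraically. I would then separately impose integrability of $J$ by checking that the bracket of a basis of $(1,0)$-vectors stays $(1,0)$; after substituting the second-derivative expressions just obtained, the $(1,0)$-condition collapses to $F_\eta = G_\xi = 0$, i.e. $F = F(\xi)$, $G = G(\eta)$. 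Feeding this back into \cref{eq:constr_twist_xx_hh} reduces it to $x_{\xi\xi}/x_\xi = y_{\xi\xi}/y_\xi$ and its $\eta$-analogue, which integrate to $y = c_1(\xi)x + c_2(\xi) = c_3(\eta)x + c_4(\eta)$; solving for $x$ and $y$ gives the stated rational expressions, and the only surviving equations are \cref{eq:constr_lem_final-constr}. Assembling these facts produces the statement of the lemma.

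The main obstacle is organisational rather than conceptual: the twistor equation has many frame components, and the step of using it ``algebraically'' to solve for higher derivatives of the moment maps must be checked to be consistent and not to discard or double-count information, especially since the K\"ahler integrability condition is imposed only afterwards. A secondary delicate point is the claim of \emph{local} generality --- one must verify that the diagonalisation and the frame choice introduce no hidden restrictions (that $\mathrm{d}\xi\neq 0$ and that the transverse block is genuinely Riemannian away from the axis), so that every geometry satisfying the hypotheses really does fit into this chart.
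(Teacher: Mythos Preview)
Your proposal is correct and follows essentially the same seven-step construction as the paper: extracting $\xi$ from the degeneracy of $I+J$, diagonalising the transverse block to obtain $\eta$, fixing the frame, reading off $I=I^1$, splitting the twistor equation into the closure condition \cref{eq:constr_xy_xh} and the remaining components \cref{eq:constr_twist_xx_hh}, then using the integrability of $J$ to force $F_\eta=G_\xi=0$, and finally integrating the simplified twistor conditions to the rational form of $x,y$. The obstacles you flag (consistency of the algebraic use of \cref{eq:constr_twist_xx_hh} before K\"ahler integrability is imposed, and genericity of the chart) are exactly the points the paper handles implicitly, so there is nothing to add.
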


We will proceed with solution of the final constraints in the next section. To simplify them we will use the remaining freedom in the definition of the chart. Firstly, notice that the coordinates $\xi, \eta$ are defined up to their reparameterisations. Secondly, the Killing fields $m_\Psi, m_\Phi$ are defined up to the action of constant $GL(2,\mathbb{R})$ coupled to an appropriate affine transformation of the moment maps $x,y$
\begin{equation}\label{eq:constr_affine-freedom}
    \Psi^i\, \rightarrow\,  \Lambda^i_{\,\,\,j} \Psi^j\,, \qquad x_i \,\rightarrow \, x_j \,\left(\Lambda^{-1}\right)^{j}_{\,\,\,i} + x_{(0)\,i}\,
\end{equation}
where  $\Psi^i, x_i$ collectively stand for $\Psi,\Phi$ and $x,y$, $\Lambda \in GL(2,\mathbb{R})$, and $x_{(0)\,i}$ is a constant shift of origin of moment maps. Under these transformations the functions $c_{1,2,3,4}$ change as
\begin{equation}\label{eq:constr_affine-transform-c1234}
    \begin{aligned}
        c_1 \rightarrow \dfrac{a_{21} +a_{11} c_1}{a_{22}+a_{12}c_1}\,, \qquad c_2 \rightarrow \dfrac{(a_{12} x_0 - a_{11} y_0) c_1 + \mathrm{det}\Lambda^{-1}\, c_2 + a_{22} x_0 - a_{21} y_0 }{a_{22} + a_{12} c_1}\,,\\
        c_3 \rightarrow \dfrac{a_{21} +a_{11} c_3}{a_{22}+a_{12}c_3}\,, \qquad c_4 \rightarrow \dfrac{(a_{12} x_0 - a_{11} y_0) c_3 + \mathrm{det}\Lambda^{-1}\, c_4 + a_{22} x_0 - a_{21} y_0}{a_{22} + a_{12} c_3}\,,
    \end{aligned}
\end{equation}
where $\Lambda^{-1} = \left( \begin{array}{cc} a_{11}  & a_{12} \\ a_{21}  &  a_{22}\end{array} \right)$ and $x_{(0)\, i } = (x_0, y_0)$. In particular, a following proposition follows. 
\begin{proposition}\label{prop:zero-for-c1-c3}
    In any local chart of \cref{lem:diag_chart}, one can always arrange for $c_1$ (or $c_3$) to have a zero by an appropriate rotation $\Lambda$.
\end{proposition}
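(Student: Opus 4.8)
The plan is to read the claim straight off the Möbius transformation law \cref{eq:constr_affine-transform-c1234} for $c_1$ under the residual $GL(2,\mathbb{R})$ freedom, specialised to a pure rotation (no origin shift, $x_0 = y_0 = 0$). Writing $\Lambda^{-1} = \left( \begin{smallmatrix} a_{11} & a_{12} \\ a_{21} & a_{22} \end{smallmatrix} \right)$, that law sends the function $\xi \mapsto c_1(\xi)$ to $\xi \mapsto (a_{21} + a_{11} c_1)/(a_{22} + a_{12} c_1)$; equivalently, $GL(2,\mathbb{R})$ acts on the "slope" $c_1$ through the action of $PGL(2,\mathbb{R})$ on $\mathbb{RP}^1$, which is transitive. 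So first I would fix any point $\xi_0$ in the (non-empty) chart, set $v := c_1(\xi_0) \in \mathbb{RP}^1$, and apply the Möbius element that sends $v$ to $0$; the transformed $c_1$ then vanishes at $\xi_0$, which is exactly "$c_1$ has a zero".

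Concretely, when $v$ is finite one takes $\Lambda^{-1} = \left( \begin{smallmatrix} 1 & 0 \\ -v & 1 \end{smallmatrix} \right)$, which has determinant $1$ and hence lies in $GL(2,\mathbb{R})$, and which transforms $c_1 \mapsto c_1 - v$; this function vanishes wherever $c_1$ attains the value $v$ (and identically if $c_1$ was constant equal to $v$). When $v = \infty$ — a "vertical" $\xi$-line in the $(x,y)$-plane — one instead uses $\Lambda^{-1} = \left( \begin{smallmatrix} 0 & 1 \\ 1 & 0 \end{smallmatrix} \right)$, sending $c_1 \mapsto 1/c_1$. The argument for $c_3$ is word-for-word identical, using the second line of \cref{eq:constr_affine-transform-c1234}. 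Note also that nothing needs to be re-verified about the constraint system \cref{eq:constr_lem_final-constr}, since it is manifestly invariant under the family automorphisms \cref{eq:constr_affine-freedom} by the very way that freedom was set up.

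The one point that demands a little care — and the only real obstacle, though a minor one — is the projective nature of the slope: a priori $c_1$ might actually pass through $\infty$ somewhere on the chart, in which case the naive shear above is ill-defined there and one must worry about the smoothness of $1/c_1$. I would handle this by a preliminary generic rotation: since the chart can be taken relatively compact, a small generic $\Lambda_0 \in GL(2,\mathbb{R})$ keeps $c_1$ finite throughout it, after which the elementary shear $\left( \begin{smallmatrix} 1 & 0 \\ -v & 1 \end{smallmatrix} \right)$ with $v = c_1(\xi_0)$ finishes the job. Alternatively one simply works throughout with the $PGL(2,\mathbb{R})$-action on $\mathbb{RP}^1$, where transitivity makes the whole statement immediate and the $v=\infty$ case requires no separate treatment.
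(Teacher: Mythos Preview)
Your argument is correct and is precisely the approach the paper intends: the proposition is stated immediately after the transformation law \cref{eq:constr_affine-transform-c1234} with the single sentence ``In particular, a following proposition follows,'' so the paper offers no more than the observation that $c_1$ transforms by a M\"obius map and that such maps act transitively on $\mathbb{RP}^1$. Your version is simply more explicit---writing down the shear $\Lambda^{-1}=\left(\begin{smallmatrix}1&0\\-v&1\end{smallmatrix}\right)$ and noting the $v=\infty$ edge case---but there is no substantive difference in strategy.
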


Next, notice that $\xi, \eta$ enter the construction on an equal basis: indeed, instead of $I+J$ in the first step, we could have defined the coordinate $\eta$ wrt $I-J$, obtaining the same geometry. Therefore, we have an additional duality
\begin{equation}\label{eq:constr_duality}
    \xi \leftrightarrow \eta\,, \quad F \leftrightarrow G\,,\quad\, \Psi \leftrightarrow \Phi\,, \quad x \leftrightarrow y
\end{equation}
which allows for reduction of the number of special cases in the next section. 

Finally, we would like to comment that the dimensionality is crucial for this construction, especially in choosing the parameterisation of the metric at the steps 2 and 3.

\subsection{Construction of the twistor form. Special cases} 
\label{sub:construction_of_the_twistor_form_special}

The function $\mu$ can be found by solving the system \cref{eq:constr_lem_final-constr} algebraically for derivatives $\mu_{\xi}, \mu_\eta$ and checking that they are compatible
\begin{equation}\label{eq:twist-spec_int_constr}
    \mu_{\xi \eta} = \mu_{\eta \xi}\,, \qquad \mu_\xi = \frac{x_\xi y_{\xi \eta} - x_{\xi \eta} y_\xi}{x_\xi y_\eta - \xi_\eta y_\xi}\,, \qquad \mu_\eta = \frac{x_{\xi \eta} y_\eta - x_\eta y_{\xi \eta}}{x_\xi y_\eta - \xi_\eta y_\xi}
\end{equation}
It is useful to fix the freedom in $\xi,\eta$ by making two of the functions $c_{1,2,3,4}$ linear. Let us consider several important cases first. \\

\subsubsection{Product-toric geometries}

These geometries correspond to constant functions $c_{1,3}$. Let $c_{1,3} = c_{01,03} = \textrm{const}$. Notice that from \cref{eq:constr_xy-c1234_sol} the constants are non-equal $c_{01} \neq c_{03}$, and the functions $c_2, c_4$ are non-constant. We can therefore choose
\begin{equation}
    c_2 = \xi\,, \quad c_4 = \eta\,.
\end{equation}
The system \cref{eq:constr_lem_final-constr} has the solution
\begin{equation}
    \mu = \textrm{const}\,, \qquad (x, y) = \frac{1}{c_{01}-c_{03}} \left( c_{01} \eta - c_{03} \xi\,,\, \eta- \xi\right)\,.
\end{equation}

This is in fact the \textit{product-toric} geometry \cite{apostolov2003}. The standard form \cref{eq:hamiltonian_product-toric} is recovered by an affine transformation
\begin{equation}\label{eq:twist-spec_Product-toric}
    x = \xi\,, \qquad y = \eta\,, \qquad \mu = \textrm{const}\,.
\end{equation}

The twistor form $\phi^{tw}=I$ then defines a second K\"{a}hler structure on the geometry.

\subsubsection{Calabi-toric geometries}

This case corresponds to only one of the functions $c_{1,3}$ being constant. By the duality \cref{eq:constr_duality} it is sufficient to consider the case $c_1 = c_{01} = \textrm{const}$. From \cref{eq:constr_xy-c1234_sol} it follows that $c_2$ is non-constant, and it is then convenient to fix the $\xi, \eta$ coordinates by
\begin{equation}
    c_2 = \xi + c_{02}\,, \qquad c_3 = c_{01} + \frac{1}{\eta}
\end{equation}
where $c_{02}$ is a constant to be fixed later. The system \cref{eq:constr_lem_final-constr} then dictates 
\begin{equation}
    c_4 = -\frac{c_{04}}{\eta} + c_{03}\,, \qquad \mu = \mathrm{log}\, \xi\, +\, \textrm{const}\,,
\end{equation}
for some constants $c_{03}, c_{04}$, and we have fixed $c_{02}= c_{03}$. The symplectic coordinates are given by
\begin{equation}
    x= c_{03} + c_{01} c_{04} + \xi + c_{01}\, \xi \eta\,,\qquad y = c_{04} \,+\, \xi   \eta.
\end{equation}

The geometry is \textit{Calabi-toric} \cite{apostolov2003}, with the standard form \cref{eq:hamiltonian_Calabi-toric} recovered by an affine transformation
\begin{equation}\label{eq:twist-spec_Calabi-toric}
     x = \xi\,, \qquad y = \xi \eta\,, \qquad \mu = \mathrm{log}\, \xi + \textrm{const}.
\end{equation}

For Calabi geometries $(g, J, \phi^{tw} = e^{\mu} I)$, their conformal dual $(e^{-2 \mu} g, e^{-2 \mu} I, e^{-3\mu} J)$ is also Calabi, with $\tilde{\xi}= 1/\xi$ and $\widetilde{F} (\tilde{\xi}) = \tilde{\xi}^4 F(1/\tilde{\xi})$\,.

\subsubsection{Orthotoric geometries}\label{sub:orthotoric_case_3}

This case corresponds to the functions $c_{1,3}$ are non-constant, but $c_4$ diverges as $1/c_3$ at a zero of $c_3$. 
Let us set $c_1 = \xi$ and $c_3= \eta$. The compatibility constraint \cref{eq:twist-spec_int_constr} is
\begin{equation}\label{eq:twist-spec_ortho-int-constr}
    c_2'' \left( (\xi- \eta) c_4' +c_4 -c_2\right)^3 + c_4'' \left( (\xi- \eta) c_2' + c_4 - c_2\right)^3 = 0\,.
\end{equation}
It is convenient to decompose $c_4$ into a divergent $\hat{c}$ and a regular part $\tilde{c} \in C^2(M)$. Note that from \cref{eq:constr_affine-transform-c1234}, we can still use the translation of moment maps to make
\begin{equation}\label{eq:twist-spec_ortho-shift-of-origin}
    \tilde{c}(0) = \tilde{c}'(0) = 0\,.
\end{equation}

Let\footnote{We are using the asymptotic notation defined in the \cref{eq:twist-gen_asympt-notation} in the next section.} $\hat{c} = \Theta(1/c_3) = \Theta(\eta^{-1})$. The equation diverges as $\Theta (\eta^{-6})$, and the leading order is the ODE for $c_2$
\begin{equation}
    c_2'' \xi^3 + c_{01} = 0
\end{equation}
for some constant $c_{01}$ related to the behaviour of $\hat{c}$ at this order. The solution is
\begin{equation}
    c_2 = \dfrac{c_{01}}{\xi} + c_{02} + c_{03} \xi
\end{equation}
for some constants $c_{02,03}$. Substituting back into the compatibility constraint, we find
\begin{equation}
    c_4 = \dfrac{c_{01}}{\eta}\,, \qquad c_{02} = c_{03} = 0\,,
\end{equation}
since we have used the shift of origin to remove the linear part \cref{eq:twist-spec_ortho-shift-of-origin}.

The system \cref{eq:constr_lem_final-constr} then gives
\begin{equation}\label{eq:twist-spec_Orthotoric}
    x = \xi + \eta\,, \qquad y = \xi \eta\,, \qquad \mu = \mathrm{log}\, |\eta - \xi| \,+\, \textrm{const}
\end{equation}
where we have used the remaining affine freedom and coordinate inversion $(\xi, \eta) \rightarrow (1/\xi,1/\eta)$ to recover the the standard form \cref{eq:hamiltonian_Orthotoric} of the \textit{orthotoric} geometry \cite{apostolov2003}.

The conformal duals of orthotoric geometries are not orthotoric and in fact form a separate family of solutions, the \textit{conformally orthotoric} or \textit{parabolic} geometries.

\subsubsection{Conformally orthotoric (parabolic) geometries}
\label{sub:conf-orthotoric}

This case corresponds to non-constant functions $c_{1,3}$, and $c_4$ behaving as a linear function of $c_3$. We can again fix the coordinates by $c_1=\xi$, $c_3 = \eta$. The compatibility constraint \cref{eq:twist-spec_int_constr} then gives
\begin{equation}
    c_2''=0\,,
\end{equation}
i.e. $c_2$ is also a linear function. We then find that the symplectic coordinates are given by
\begin{equation}
    x = \frac{c_{00} + c_{01} \xi}{\eta - \xi}\,, \qquad y = \frac{(c_{00} + c_{01} \xi) \eta}{\eta - \xi}\,.
\end{equation}
For $c_{01} \neq 0$ the parameters $c_{00,01}$ can be removed by a coordinate inversion $(\xi,\eta) \rightarrow (1/\xi + c_{02}, 1/\eta + c_{02})$ together with another affine transformation. The full solution is then
\begin{equation}\label{eq:twist-spec_conf-ortho}
    x = \frac{1}{\eta - \xi}\,, \qquad y = \frac{\eta}{\eta - \xi}\,, \qquad \mu = - \mathrm{log}\,\left|\eta - \xi\right| \,+\, \textrm{const}\,.
\end{equation}
For a conformally orthotoric geometry $(g, J,\phi^{tw} = e^{\mu} I)$ its conformal dual $(e^{-2 \mu} g, e^{-2 \mu} I, e^{-3\mu} J)$ is orthotoric, and we have chosen the parameterisation to make it manifest.

\subsection{Construction of the twistor form. General case} 
\label{sub:construction_of_the_twistor_form_general}

In this section we will solve the remaining constraints \cref{eq:constr_lem_final-constr} in general. It is convenient to set
\begin{equation}\label{eq:twist-gen_c1_c3}
    c_1 = \xi\,, \qquad c_3= \eta\,,
\end{equation}
which simplify the compatibility constraint \cref{eq:twist-spec_int_constr} to 
\begin{equation}\label{eq:twist-gen-int-constr}
    c_2'' \left( (\xi- \eta) c_4' +c_4 -c_2\right)^3 + c_4'' \left( (\xi- \eta) c_2' + c_4 - c_2\right)^3 = 0\,.
\end{equation}

By \cref{prop:zero-for-c1-c3}, we can assume that our chart covers at least one zero of $c_3$. The process of solution splits into two different branches depending on whether $c_4$ is regular at zero up to its second derivatives or not. If the former is true, then the parameterisation we are using for the ODE will separate a further case when $c_4$ is a linear function, which is covered by \cref{sub:conf-orthotoric}.

\subsubsection{\texorpdfstring{$c_4$ is regular at a zero of $c_3$}{c4 is regular at a zero of c3}}

Now let $c_4$ be regular at zero up to the second derivative. For the moment we fix the origin of symplectic coordinates so that
\begin{equation}\label{eq:twist-gen_c4-at-zero}
    c_4|_0 = c_4'|_0 = 0\,.
\end{equation}
The leading order of the constraint \cref{eq:twist-gen-int-constr} at zero is then
\begin{equation}
    c_2'' c_2^3 + c_{04} \left(c_2 - \xi\, c_2'\right)^3 = 0
\end{equation}

where we have defined $c_4 = \frac{1}{2} c_{04} \eta^2 + o(\eta^2)$. If $c_{04} = 0$, we have $c_2$ a linear function of $\xi$, and substituting it back into \cref{eq:twist-gen-int-constr} gives us $c_2'' = c_4''=0$\,, i.e. the case of \cref{sub:conf-orthotoric}.

Let us then assume that $c_{04}$ is non-zero, and we can introduce a new function $f(\xi)$ by $c_2:= c_{04} f$\,, so that the ODE simplifies to
\begin{equation}\label{eq:twsit-gen_ODE-reg}
    f'' f^3 + (f - \xi f')^3 = 0\,.
\end{equation}

This equation is solved in the \cref{app:solution_of_the_ODE-reg}. Its solutions are
\begin{gather}\label{eq:twsit_gen_c2-sol1}
    c_2 = \hat{c}_{04} \sqrt{\xi^2 + c_{01}^2} +\, \textrm{linear terms}\,,\\\label{eq:twsit_gen_c2-sol2}
    c_2 = \hat{c}_{04} \sqrt{|c_{01} \xi^2 + 2 c_{02} \xi + c_{02}^2|} +\, \textrm{linear terms}\,
\end{gather}
where $c_{01,02}$ and  $\hat{c}_{04}$ are constants, and $c_{01} \neq 0$ by regularity. Shifting and rescaling $(\xi, \eta)$, and using the affine transformations again, the two branches can be written collectively as  
\begin{equation}\label{eq:twsit_gen_c2-sol}
    c_1 = \xi\, \qquad c_2 = \sqrt{|\xi^2 + \alpha|}\,, \qquad c_3 = \eta\,, \qquad \alpha = \{0, \pm 1\}\,.
\end{equation}
Substituting back into \cref{eq:twist-gen-int-constr}, we then find that the solution has two branches
\begin{gather}\label{eq:twsit_gen_c24-sol}
    c_1 = \xi\,, \qquad c_2 = \sqrt{|\xi^2 + \alpha|}\,, \qquad c_3 = \eta\,, \qquad (c_4)_{\pm} = \pm\sqrt{|\eta^2 +\alpha|}\,.
\end{gather}

In the full solution the two branches persist
\begin{gather}\label{eq:twist-gen-sol-alpha}
    (x,y)_{\pm} = \frac{1}{\eta - \xi} (|R(\xi)|^{1/2} \pm |R(\eta)|^{1/2},\ \eta \,|R(\xi)|^{1/2} \pm \xi\,|R(\eta)|^{1/2})\,,\\\nonumber
    \mu_{\pm} = \mathrm{log}\,\left|\dfrac{|R(\xi)|^{1/2} |R(\eta)|^{1/2} \pm R(\xi,\eta)}{\eta - \xi} \right| \,+\, \textrm{const}\,,
\end{gather}
where $R(\xi,\eta) = \xi \eta + \alpha$ is a polarisation of a quadratic polynomial $R(z) = z^2 + \alpha$. The geometries break into \textit{hyperbolic}, \textit{parabolic} and \textit{elliptic} classes depending if $R(z)$ has two, one or zero real roots. For parabolic family $\alpha=0$ the two branches are isomorphic to the conformally orthotoric family \cref{eq:twist-spec_conf-ortho}. In the elliptic case $\alpha=1$, the two branches are locally isomorphic and can be collectively written as
\begin{equation}\label{eq:twist-gen-sol-elliptic}
    x = \frac{\eta + \xi}{\eta- \xi}\,, \qquad y = \frac{\eta \xi - 1}{\eta - \xi}\,, \qquad \mu = \mathrm{log}\,\left|\frac{\eta \xi + 1}{\eta - \xi}\right| \,+\, \textrm{const}
\end{equation}
where the transformation to $(\xi,\eta)_{\pm}$ coordinates of two branches of \cref{eq:twist-gen-sol-alpha} can be found from
\begin{equation}
    (\xi \, \xi_{\pm} + 1)^2 = \xi^2_{\pm} + 1\,, \qquad (\eta \, \eta_{\pm} + 1)^2 = \eta^2_{\pm} + 1\,.
\end{equation}
The conformal dual of elliptic geometries is also elliptic with $(\tilde{\xi}, \tilde{\eta})= (-\xi, 1/\eta)$ and reparameterisation $\tilde{F}(\tilde{\xi}) = F( - \tilde{\xi})$ and $\tilde{G}(\tilde{\eta}) = \tilde{\eta}^4 G(1/\tilde{\eta}).$

The hyperbolic branches $\alpha=-1$ are also locally isomorphic and can be conveniently expressed as
\begin{equation}\label{eq:twist-gen-sol-hyperbolic}
       x = \frac{1}{\eta - \xi}\,, \qquad y = \frac{\xi \eta}{\eta - \xi}\,, \qquad \mu = \mathrm{log}\,\left|\frac{\eta + \xi}{\eta - \xi}\right| \,+\, \textrm{const}\,.
   \end{equation}
where where the transformation to $(\xi,\eta)_{\pm}$ coordinates of two branches of \cref{eq:twist-gen-sol-alpha} can be found from
\begin{equation}
    \xi^2 (\xi_{\pm}+1)^2 = |\xi_{\pm}^2 - 1|\,, \qquad \eta^2 (\eta_{\pm}+1)^2 = |\eta_{\pm}^2 - 1|\,.
\end{equation}
The conformal dual of hyperbolic geometries is also hyperbolic with the transformation given by inversion of sign of $\xi$.

This completes the list of independent 4d toric K\"{a}hler geometries with a twistor 2-form. In the following subsection we will recover the known families again.

\subsubsection{\texorpdfstring{$c_4$ is irregular at a zero of $c_3$}{c4 is irregular at at zero of c3}}

We can decompose $c_4 = \hat{c} + \tilde{c}$ where $\hat{c}$ is divergent and $\tilde{c}$ is $C^2$ at a zero of $c_3$. Again from \cref{eq:constr_affine-transform-c1234}, we use the shift of origin of symplectic coordinates $x,y$ to set 
\begin{equation}\label{eq:twist-gen_irreg-choice-of-origin}
    \tilde{c}(0) = \tilde{c}'(0) = 0\,.
\end{equation}

Let us study the asymptotic behaviour of the equation \cref{eq:twist-gen-int-constr} at zero. For convenience, we introduce a standard asymptotic notation
\begin{gather}\label{eq:twist-gen_asympt-notation}
    f = o(g)\, \textrm{ if }\, \lim\limits_{\eta \to 0} f/g = 0\,, \qquad f = \omega(g)\, \textrm{ if }\, \lim\limits_{\eta \to 0} g/f = 0\,,\\ 
    \nonumber f = \Theta(g)\, \textrm{ if }\, 0< \left|\lim\limits_{\eta \to 0} f/g \right| < \infty \,.
\end{gather}

\begin{itemize}
    \item Let $\hat{c} = \omega(\eta^{-1})$. Then the leading divergent order of equation is
    \begin{equation}
        \hat{c}''\, \hat{c}^3 = 0
    \end{equation}
    which implies that such divergences vanish.  

    \item For $\hat{c} = \Theta(\eta^{-1})$, we have the orthotoric case, see \cref{sub:orthotoric_case_3}.

    \item For $\hat{c} = o(\eta^{-1})$, $\hat{c} = \omega(\eta^{1/2})$, the leading divergence is
    \begin{equation}
        c_2''\, \xi^3 \hat{c}'^3 = 0\,.
    \end{equation}
    The $c_2$ is therefore a linear function, and we recover the parabolic case.

    \item Let $\hat{c} = o(\eta^{1/2})$. Then the only divergent term is
    \begin{equation}
        c_4'' \left(\xi c_2' - c_2\right)^3 = 0
    \end{equation}
    which again implies that $c_2$ is a linear function as in the previous case.

    \item Finally, let $\hat{c} = \Theta (\eta^{1/2})$. We can parameterise it as $\hat{c} = \frac{1}{c_{01}} \eta^{1/2} + \hat{c}_{o}$ where $c_{01}$ is a non-zero constant and $\hat{c}_{o} = o(\eta^{1/2})$. The equation diverges as $\Theta (\eta^{-3/2})$, and the leading order is the ODE for $c_2$
    \begin{equation}\label{eq:twsit-gen_ODE-irreg}
        c_2'' \xi^3 - 2 c_{01}^2\, (\xi c_2' - c_2)^3 = 0\,.
    \end{equation}
    This equation is solved in \cref{app:solution_of_the_ODE-irreg}, and its relevant solutions are
    \begin{gather}\label{eq:twist-gen-irreg_c2-sol1}
        c_{2} = \pm \frac{\sqrt{\xi}}{c_{01}} \left[ 1 + \left(c_{03}^{-1}+\frac{1}{4} c_{02}^2 c_{03}\right) \xi + c_{02} \sqrt{\xi (\xi + c_{03})} \right]^{1/2}, \quad c_{03} \neq 0
        \,,\\\label{eq:twist-gen-irreg_c2-sol2}
        c_{2} = \pm \frac{\sqrt{\xi}}{c_{01}} \left[ 1 - \left(c_{03}^{-1}-\frac{1}{4} c_{02}^2 c_{03}\right) \xi + c_{02} \sqrt{\xi (c_{03}-\xi)} \right]^{1/2}, \quad c_{03} > 0
        \,,\\
        \label{eq:twist-gen-irreg_c2-sol3}
        c_{2} = \pm \frac{1}{c_{01}} \sqrt{\xi (1 + c_{02} \xi)}\,.
    \end{gather}
    The function $c_4$ can now be obtained by substituting these solutions into \cref{eq:twist-gen-int-constr} and expanding it in the powers of $\xi$ at zero. We find that these solutions are isomorphic to the above. First of all using our choice of origin \cref{eq:twist-gen_irreg-choice-of-origin}, solutions \cref{eq:twist-gen-irreg_c2-sol1,eq:twist-gen-irreg_c2-sol2} reduce to the last one \cref{eq:twist-gen-irreg_c2-sol3}
    \begin{equation}
        c_{2} = \pm \frac{1}{c_{01}} \sqrt{\xi (1 + c_{03}^{-1}\, \xi)}\,, \qquad c_{4} = \pm \frac{1}{c_{01}} \sqrt{\eta (1 + c_{03}^{-1}\, \eta)}\,, \qquad c_{02} = 0\,,\ c_{03} \neq 0\,.
    \end{equation}
   For the last branch \cref{eq:twist-gen-irreg_c2-sol3} we find $c_{4} = \frac{1}{c_{01}} \sqrt{\eta (1 + c_{02} \eta)}$, i.e. for $c_{02} \neq 0$ the solution is isomorphic to the cases of the previous section. The case $c_{02} = 0$ corresponds to the hyperbolic branch, and after a coordinate transformation $(\xi, \eta) \rightarrow (\xi^2,\eta^2)$ it takes the form \cref{eq:twist-gen-sol-hyperbolic}. 

\end{itemize}

\subsection{Non-isomorphism of the families}
\label{sub:non-isomorphism_of_the_families}

In the previous subsection we have completed the list of families of toric K\"{a}hler geometries with a SD twistor 2-form in the negative eigenspace of $\mathrm{i}_\varphi$; the positive eigenspace is incompatible with smooth axis by \cref{prop:no-I2}, and to finish the proof of \cref{thm:main} we only need to prove that the families are independent up to \textit{family isomorphisms}, that is up to diffeomorphisms and a change of metric functions $(F,G) \rightarrow (\tilde{F},\tilde{G})$. 

Firstly, three of these families, the product-toric, the Calabi-toric and the orthotoric, are well-known to the literature and together comprise a class of all toric K\"{a}hler geometries with a hamiltonian 2-form; their non-isomorphism follows from the construction \cite{apostolov2003}. In \cref{lem:ham_curv-constr} of the next section, we show that the existence of a hamiltonian 2-form is equivalent to a constraint on the Ricci form whose SD part must be a multiple of a twistor 2-form. We can check this condition for each geometry in the given basis of SD forms \cref{eq:constr_SD_basis}. Firstly, for any toric geometry, the contraction of $\mathcal{R}$ on $I^2$ is zero by the symmetry. For new geometries we then find
\begin{equation}
    \frac{1}{4} I^{3\,\, mn} \mathcal{R}_{m n} = \dfrac{2 \sqrt{F G}}{\eta - \xi} \not\equiv 0
\end{equation}
for hyperbolic, elliptic and parabolic (conformally orthotoric) geometries in the parameterisation \cref{eq:twist-gen-sol-alpha}\,. Notice in particular, that the orthotoric geometry is not isomorphic to its conformal dual, which happens for all other families.
  
To show that the new geometries are mutually distinct, we consider how a twistor 2-form relates to symmetries. In a similar setting, a hamiltonian 2-form generates two hamiltonian potentials for the Killing vector fields (which can happen to be linearly dependent, or, in a product-toric case, zero). For twistor 2-forms we make a following observation.
\begin{proposition}\label{prop:square_norm is quadratic}
    For 4d toric K\"{a}hler geometries, the square norm $e^{2 \mu}$ of the twistor 2-form is at most quadratic in moment maps.
\end{proposition}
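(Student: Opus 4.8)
The plan is to deduce the statement from the explicit normal forms of the classification and then to eliminate the auxiliary coordinates $\xi,\eta$ by hand. First I would record that the property ``$e^{2\mu}$ is a polynomial of degree at most two in the moment maps'' is invariant under family isomorphisms --- affine redefinitions of $(x,y)$ coupled to $GL(2,\mathbb{R})$ acting on $m_\Psi,m_\Phi$, the reparametrisations $(F,G)\mapsto(\tilde F,\tilde G)$, and diffeomorphisms, cf.\ \cref{eq:constr_affine-freedom,eq:constr_affine-transform-c1234}. By \cref{prop:no-I2} the twistor form lies in the negative eigenspace of $\mathrm{i}_\varphi$, so \cref{lem:diag_chart} together with the exhaustive case analysis of \cref{sub:construction_of_the_twistor_form_special,sub:construction_of_the_twistor_form_general} (summarised in \cref{thm:main}) reduces the proposition to checking the six normal forms \cref{eq:twist-spec_Product-toric,eq:twist-spec_Calabi-toric,eq:twist-spec_Orthotoric,eq:twist-gen-sol-elliptic,eq:twist-spec_conf-ortho,eq:twist-gen-sol-hyperbolic}. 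In each of them $x$, $y$ and $e^{\mu}$ are explicit rational or algebraic functions of $\xi,\eta$, so the remaining task is purely algebraic: produce in each case a quadratic $Q$ with $e^{2\mu}=c\,Q(x,y)$ for some constant $c>0$.

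Four of the six are one-liners. For product-toric \cref{eq:twist-spec_Product-toric} $\mu$ is constant, so $e^{2\mu}$ is a constant. For Calabi-toric \cref{eq:twist-spec_Calabi-toric} $x=\xi$ and $e^{\mu}\propto\xi$, hence $e^{2\mu}\propto x^{2}$; similarly the conformally orthotoric (parabolic) form \cref{eq:twist-spec_conf-ortho} has $e^{2\mu}\propto(\eta-\xi)^{-2}=x^{2}$. For orthotoric \cref{eq:twist-spec_Orthotoric}, $x=\xi+\eta$, $y=\xi\eta$ and $e^{2\mu}\propto(\eta-\xi)^{2}=(\xi+\eta)^{2}-4\xi\eta=x^{2}-4y$. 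For the hyperbolic form \cref{eq:twist-gen-sol-hyperbolic}, $x=(\eta-\xi)^{-1}$ and $y=\xi\eta(\eta-\xi)^{-1}$ give $\xi\eta=y/x$, $(\eta-\xi)^{-2}=x^{2}$, whence $e^{2\mu}\propto(\eta+\xi)^{2}(\eta-\xi)^{-2}=\bigl((\eta-\xi)^{2}+4\xi\eta\bigr)(\eta-\xi)^{-2}=1+4xy$.

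The remaining elliptic case \cref{eq:twist-gen-sol-elliptic} needs a slightly longer elimination. I would set $s=\xi+\eta$, $p=\xi\eta$, $d=\eta-\xi$, so that $x=s/d$, $y=(p-1)/d$, $e^{2\mu}\propto(p+1)^{2}/d^{2}$, and $d^{2}=s^{2}-4p$ identically. Substituting $s=xd$, $p=yd+1$ gives $d^{2}(x^{2}-1)-4yd-4=0$; dividing by $d^{2}$ and putting $w=2/d$ yields $w^{2}+2yw-(x^{2}-1)=0$, and since $e^{2\mu}\propto(p+1)^{2}/d^{2}=(yd+2)^{2}/d^{2}=(y+w)^{2}=y^{2}+2yw+w^{2}$ this collapses to $e^{2\mu}\propto x^{2}+y^{2}-1$. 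Alternatively I could handle the elliptic, parabolic and hyperbolic families in one stroke starting from the common form \cref{eq:twist-gen-sol-alpha}: with $a=|R(\xi)|^{1/2}$, $b=|R(\eta)|^{1/2}$, $R(z)=z^{2}+\alpha$, a direct check gives $y-\xi x=a$ and $(y-\eta x)^{2}=b^{2}$, so $\xi,\eta$ are the two roots of $(x^{2}-1)t^{2}-2xy\,t+(y^{2}-\alpha)=0$; Vieta's formulas together with the identity $R(\xi)R(\eta)-(\xi\eta+\alpha)^{2}=\alpha(\eta-\xi)^{2}$ then reduce $e^{2\mu}=\bigl(ab\pm(\xi\eta+\alpha)\bigr)^{2}/(\eta-\xi)^{2}$ to $e^{2\mu}=\alpha x^{2}+y^{2}-\alpha$.

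I expect the main (mild) obstacle to be purely bookkeeping: keeping track of the branch signs ``$\pm$'' in \cref{eq:twist-gen-sol-alpha} and \cref{eq:twist-gen-sol-elliptic}, and of the fact that the eliminations are performed away from the degeneracy loci $d=\eta-\xi=0$ and $x^{2}=1$. Since both sides of each resulting identity are smooth and $M$ is connected, the polynomial identity extends across these loci, so this causes no real difficulty, and no ingredient beyond the classification and elementary algebra is used.
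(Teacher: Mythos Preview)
Your proposal is correct and follows essentially the same approach as the paper: the paper's proof is a direct verification against the six normal forms of the classification, simply stating ``this can be checked directly'' and tabulating the outcomes $e^{2\mu}=\text{const},\,x^2,\,|x^2-4y|,\,|x^2+y^2-1|,\,x^2,\,|4xy+1|$ for the product-toric, Calabi-toric, orthotoric, elliptic, parabolic and hyperbolic families respectively. Your eliminations reproduce exactly these values, and your additional unified treatment of the elliptic/parabolic/hyperbolic cases via \cref{eq:twist-gen-sol-alpha} is a nice extra not present in the paper.
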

\begin{proof}
    This can be checked directly. Using explicit expressions \cref{eq:twist-spec_Product-toric,eq:twist-spec_Calabi-toric,eq:twist-spec_Orthotoric,eq:twist-spec_conf-ortho,eq:twist-gen-sol-elliptic,eq:twist-gen-sol-hyperbolic} for our geometries we find
    \begin{center}
        \begin{tabular}{ |l|l||l|l| }
        \hline
            Family        &  $e^{2 \mu}$    & Family        &   $e^{2 \mu}$\\
        \hline
            Product-toric &  const          & Elliptic      &   $|x^2+y^2 - 1|$\\ 
            Calabi-toric  &  $x^2$          & Parabolic     &   $x^2 $ \\  
            Orthotoric    &  $|x^2 - 4 y|$  & Hyperbolic    &   $|4 x y + 1|$\\
        \hline  
        \end{tabular}
    \end{center}   
\end{proof}
Since there is no such an affine transformation of moment maps $x,y$ that can bring the square norm of, e.g., elliptic family to parabolic or hyperbolic form, we deduce that the families are indeed independent. This concludes the proof of \cref{thm:main}.

Some comments are in order. Firstly, we have observed the \cref{prop:square_norm is quadratic} rather than derived it. At the same time, the simple algebraic form of the square norm in terms of moment maps suggests that this result can be more naturally obtained using the tools of algebraic geometry or, perhaps, conformal geometry \cite{Dunajski:2024uhh}. Secondly, since toric geometries form a sufficiently broad class, one can expect that this form holds generally for all K\"{a}hler geometries with a twistor 2-form. We therefore put forward \cref{conj:conj}, mostly in the hope that it will help in extending our results to a more general (and rarely studied) case of single $U(1)$ symmetry. Furthermore, $SU(2)$-symmetric geometries can also provide an interesting testing ground for the validity of our proposal.

\section{Hamiltonian 2-forms on toric geometries} 
\label{sec:hamiltonian_2_forms}

Hamiltonian 2-forms have been introduced in \cite{apostolov2003} as a special construction based on twistor 2-forms.
\begin{definition}\label{def:hamiltonian-2form}
    A hamiltonian 2-form on a K\"{a}hler surface $(M, g, J)$ is a closed J-invariant 2-form $\phi^{ham}$ whose self-dual part $\phi^{+} \not\equiv 0$ is a twistor 2-form.  
\end{definition}
We added the assumption of non-vanishing SD part to exclude the K\"{a}hler form itself and its multiples. This definition can be equivalently rewritten as a differential equation\footnote{In coordinate notation, and writing $\sigma_m = \partial_m \sigma$, it takes the form
\begin{equation}
    \nabla_m \phi^{ham}_{np} = \frac{2}{3} \sigma_m J_{np} - \frac{2}{3} J_{m[n} \sigma_{p]} -\frac{2}{3} g_{m[n} J_{p] q} \sigma^q\,.
\end{equation}
} $\forall X \in TM$
\begin{equation}\label{eq:hamiltonian_Ham-def}
    \nabla_X \phi^{ham} = \frac{2}{3} X(\sigma) \ J -\frac{1}{3} \mathrm{d} \sigma \wedge JX - \frac{1}{3} X \wedge J \mathrm{d} \sigma
\end{equation}   
where $\sigma = \frac{1}{4} \phi^{ham}_{mn} J^{mn}$, $(JX)_m = J_{mn} X^n$. Not all twistor forms give rise to hamiltonian 2-forms, and we study necessary and sufficient conditions in \cref{sub:curvature_constraints_hamiltonian}.

\subsection{Product-toric, Calabi-type and orthotoric geometries}

To justify the name, hamiltonian 2-forms generate hamiltonian potentials for commuting Killing vector fields. This can be summarised in the following proposition \cite{apostolov2003}.
\begin{proposition}
    Let $(M,g,J)$ be a K\"{a}hler surface and let $\phi^{ham} = \phi^{tw} + \sigma J$ be a hamiltonian 2-form.\\
    Then the trace $\sigma$ and the pfaffian\footnote{Recall that the \textit{pfaffian} pf($\psi$) of a 2-form $\psi$ is defined by $\textrm{pf}(\psi) = 2 \star (\psi \wedge \psi).$} $\pi$ of the \textit{normalised} hamiltonian 2-form $\phi^{norm} = \frac{1}{2} \phi^{tw} + \frac{1}{6} \sigma J$ are either constants or hamiltonian potentials for commuting Killing vector fields $K_1 = J \mathrm{d} \sigma$ and $K_2 = J\mathrm{d} \pi$.
\end{proposition}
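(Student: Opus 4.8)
The plan is to verify directly, from the defining differential equation \cref{eq:hamiltonian_Ham-def} for a hamiltonian 2-form, that $K_1 := J\mathrm{d}\sigma$ and $K_2 := J\mathrm{d}\pi$ are Killing and that $\sigma$ (resp.\ $\pi$, up to suitable constants coming from the normalisation $\phi^{norm}=\tfrac12\phi^{tw}+\tfrac16\sigma J$) serves as their hamiltonian potential, i.e.\ $\iota_{K_i}J = -\mathrm{d}(\text{potential})$. The key inputs are: (i) $\phi^{ham}$ is closed and $J$-invariant; (ii) $\sigma=\tfrac14\phi^{ham}_{mn}J^{mn}$ is its trace and $\pi$ its pfaffian; and (iii) $\nabla J=0$, so that on a K\"ahler surface the Hodge-$\star$ and $J$-action on 2-forms are algebraically controlled and $\mathcal{R}$ enters only through the SD part.

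First I would show $K_1=J\mathrm{d}\sigma$ is Killing. Contracting \cref{eq:hamiltonian_Ham-def} and using $\nabla J=0$, one reads off $\nabla_X(J\mathrm{d}\sigma)$ in terms of $\nabla_X\mathrm{d}\sigma = \nabla_X\nabla\sigma$ (the Hessian of $\sigma$); the claim $\mathcal{L}_{K_1}g=0$ is equivalent to the symmetric part of $\nabla K_1$ vanishing. The strategy is to take the covariant derivative of \cref{eq:hamiltonian_Ham-def} once more and antisymmetrise to produce a Weitzenb\"ock-type identity relating the Hessian of $\sigma$ to curvature terms acting on $\phi^{ham}$; because the SD part of $\phi^{ham}$ is a twistor 2-form, \cref{cor:alt-form-of-tw-eqn} constrains $\nabla$ of that SD part, and the closedness of $\phi^{ham}$ plus $J$-invariance pins down the ASD part. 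The upshot, which is essentially the content of \cite{apostolov2003}, is that $\nabla\nabla\sigma$ is forced to be $J$-anti-invariant in the appropriate indices, which is exactly the statement that $J\mathrm{d}\sigma$ is Killing. That $\sigma$ is its hamiltonian potential is then immediate: $\iota_{K_1}J(Y) = J(J\mathrm{d}\sigma, Y) = -\langle \mathrm{d}\sigma, Y\rangle$ by $J^2=-\mathrm{id}$ and compatibility, so $\iota_{K_1}J = -\mathrm{d}\sigma$; the normalisation factor $\tfrac16$ vs.\ the $\tfrac23$ in \cref{eq:hamiltonian_Ham-def} is just bookkeeping to make the potential come out as $\sigma$ itself for $\phi^{norm}$.

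Next, for $K_2 = J\mathrm{d}\pi$ with $\pi = 2\star(\phi^{norm}\wedge\phi^{norm})$, the idea is to differentiate the pfaffian using \cref{eq:hamiltonian_Ham-def}: $\mathrm{d}\pi$ is expressed through $\nabla\phi^{norm}$ contracted against $\phi^{norm}$, and the right-hand side of \cref{eq:hamiltonian_Ham-def} is built entirely out of $J$ and $\mathrm{d}\sigma$, so $\mathrm{d}\pi$ becomes an algebraic expression in $\sigma$, $\pi$ and $\mathrm{d}\sigma$. One then checks $[K_1,K_2]=0$ (equivalently $\mathcal{L}_{K_1}\pi = 0$, which follows since $\pi$ is built $K_1$-invariantly from $\phi^{norm}$ and $K_1$ preserves the K\"ahler data), and reruns the Killing argument for $K_2$: the Hessian of $\pi$ inherits the same $J$-anti-invariance. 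The degenerate alternatives in the statement — $\sigma$ or $\pi$ constant — arise precisely when $\mathrm{d}\sigma\equiv 0$ or $\mathrm{d}\pi\equiv 0$, in which case the corresponding ``Killing field'' is zero and there is nothing to prove; this is the product-toric situation.

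\textbf{Main obstacle.} The delicate point is the first step: showing that the (a priori arbitrary) Hessian of $\sigma$ is forced to be $J$-anti-invariant, i.e.\ that $J\mathrm{d}\sigma$ is genuinely Killing rather than merely conformal Killing. This requires combining the twistor equation for $\phi^{+}$ (via \cref{cor:alt-form-of-tw-eqn}) with the closedness of the \emph{full} $\phi^{ham}$ in a Weitzenb\"ock identity, and carefully tracking which curvature contractions survive on a K\"ahler surface; the $J$-invariance of $\phi^{ham}$ and the special SD/ASD decomposition available in four dimensions are what make the obstruction terms cancel. Everything after that — the potentials, the commutation $[K_1,K_2]=0$, and the pfaffian computation — is a direct if somewhat lengthy calculation using only $\nabla J=0$ and the algebra of 2-forms in dimension four.
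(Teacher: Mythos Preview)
The paper does not give its own proof of this proposition; it is quoted as background from \cite{apostolov2003}, so there is no in-paper argument to compare against. Your outline is essentially the route taken in that reference, and your identification of the main obstacle --- forcing the Hessian of $\sigma$ to have the correct $J$-type so that $J\mathrm{d}\sigma$ is genuinely Killing rather than merely conformal Killing --- is the right one.

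Two small points. First, a terminological slip: the condition for $K_1=J\mathrm{d}\sigma$ to be Killing is that $\mathrm{Hess}(\sigma)$ be $J$-\emph{invariant} (type $(1,1)$, i.e.\ $\mathrm{Hess}(\sigma)(JX,JY)=\mathrm{Hess}(\sigma)(X,Y)$), not $J$-anti-invariant; you have the sign reversed. Second, in \cite{apostolov2003} the argument is more direct than a full Weitzenb\"ock computation: one differentiates \cref{eq:hamiltonian_Ham-def} once and contracts suitably against $J$ and $\phi^{ham}$ to obtain the Hessian constraint on $\sigma$ and the relation $\mathrm{d}\pi$ in terms of $\mathrm{d}\sigma$ and the eigenvalues of $\phi^{norm}$; the twistor equation for $\phi^{+}$ is already encoded in \cref{eq:hamiltonian_Ham-def} and need not be invoked separately via \cref{cor:alt-form-of-tw-eqn}. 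These are refinements of execution rather than of strategy; the plan you describe is sound.
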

This means that geometries with a hamiltonian 2-form fall into three distinct families:
\begin{enumerate}
    \item Bik\"{a}hler: $K_1 = K_2 = 0$\,,
    \item Calabi-type: $K_1$ is non-vanishing identically, but $K_1 \wedge K_2 = 0$\,,
    \item Orthotoric: $K_1 \wedge K_2 \neq 0$ on an open dense set.
\end{enumerate}
For toric geometries one can further show that bik\"{a}hler geometries necessarily take the form of a Cartesian product of two axisymmetric Riemannian surfaces (hence the ``product-toric'' name). The standard form for such geometries is given by
\begin{equation}\label{eq:hamiltonian_product-toric}
    \begin{aligned}
        h &= \frac{\mathrm{d}\xi^2}{F(\xi)}\,+\,\frac{\mathrm{d}\eta^2}{G(\eta)}\,+\,F(\xi)\mathrm{d}\Psi^2\,+\,G(\eta) \mathrm{d}\Phi^2\,,\\
        J &= \mathrm{d} \left(\xi \,\mathrm{d} \Psi \,+\, \eta\, \mathrm{d}\Phi\right)\,.
    \end{aligned}
\end{equation}
The examples of such geometries include Euclidean space which in terms of symplectic coordinates is given by the potential\footnote{A familiar form can be recovered by introducing radii $r_i$, $x_i := r_i^2$.} $\mathrm{g} = \sum_i \frac{1}{2} x_i \mathrm{log}\, x_i$. The product-toric chart is then
\begin{equation}
     F_{\mathbb{R}^4}(\xi)=2 \xi\,, \qquad G_{\mathbb{R}^4}(\eta)=2 \eta\,, \qquad x_1 = \xi, \qquad x_2 = \eta\,.
\end{equation}
Euclidean space is the only geometry which possesses different hamiltonian 2-forms of all three types, and other charts are given below. 

Calabi-type geometries are named after the Calabi construction \cite{calabi2016} of metrics on the total space of a Hermitian line bundle over a Riemann surface. Given the standard chart for toric Calabi-type geometries
\begin{equation}\label{eq:hamiltonian_Calabi-toric}
    \begin{aligned}
        h &= \frac{\xi}{F(\xi)}\mathrm{d}\xi^2\,+\,\frac{\xi}{G(\eta)}\mathrm{d}\eta^2\,+\,\frac{F(\xi)}{\xi}\left(\mathrm{d}\Psi + \eta \mathrm{d} \Phi\right)^2\,+\, \xi \,G(\eta) \mathrm{d}\Phi^2\,,\\
        J &= \mathrm{d} \left(\xi \,\mathrm{d} \Psi \,+\, \xi \eta\, \mathrm{d}\Phi\right)\,,
    \end{aligned}
\end{equation}
the Riemann surface $\Sigma$ is the $(\eta, \Phi)$ two-dimensional subspace, and the Killing vector field $K = \partial / \partial \Psi$ generates the natural $U(1)$ action on this line bundle. The examples of geometries in this class include all $SU(2) \times U(1)$-invariant K\"{a}hler surfaces \cite{apostolov2003}, and, hence, the Euclidean space and the $\mathbb{C}P^2$. The former is given by
\begin{equation}
    F_{\mathbb{R}^4}(\xi) = \xi^2\,, \qquad G_{\mathbb{R}^4}(\eta) = 1- \eta^2\,, \qquad x_1 =\xi (1- \eta)\,, \qquad x_2 = \xi (1+\eta)\,.
\end{equation}
The standard metric for the latter can be found from the symplectic potential $\mathrm{g} = \sum_i \frac{1}{2} x_i \mathrm{log}\, x_i - (x_1 + x_2 - 1) \mathrm{log}\, (1 - x_1 - x_2)$, and in Calabi-type chart it takes the form 
\begin{equation}
    F_{\mathbb{C} P^2}(\xi) = 2 \xi (1- \xi^2)\,, \qquad G_{\mathbb{C} P^2}(\eta) = 2 (1- \eta^2)\,, \qquad x_1 =\frac{1}{2} \xi (1-\eta)\,, \qquad x_2 = \frac{1}{2} \xi (1+\eta)\,.
\end{equation}

The standard form of orthotoric geometries is
\begin{equation}\label{eq:hamiltonian_Orthotoric}
    \begin{aligned}
        h &= \frac{\xi - \eta}{F(\xi)}\mathrm{d}\xi^2\,+\,\frac{\xi - \eta}{G(\eta)}\mathrm{d}\eta^2\,+\,\frac{F(\xi)}{\xi - \eta}\left(\mathrm{d}\Psi + \eta \mathrm{d} \Phi\right)^2\,+\, \frac{G(\eta)}{\xi - \eta}\left(\mathrm{d}\Psi + \xi \mathrm{d} \Phi\right)^2\,,\\
        J &= \mathrm{d} \left( (\xi + \eta) \,\mathrm{d} \Psi \,+\, \xi \eta\, \mathrm{d}\Phi\right)\,.
    \end{aligned}
\end{equation}
The Euclidean space in this chart is described by
\begin{equation}
    F_{\mathbb{R}^4}(\xi) = \xi^2 - \xi\,, \qquad G_{\mathbb{R}^4}(\eta) = \eta - \eta^2\,, \qquad x_1 = 2 (\xi -1) (1- \eta)\,, \qquad x_2 = 2\xi \eta\,.
\end{equation}
The complex projective space $\mathbb{C}P^2$ also possesses two independent hamiltonian 2-form structures. In the orthotoric chart it reads as
\begin{equation}
    \begin{aligned}
        F_{\mathbb{C} P^2}(\xi) = 2 \xi (\xi^2 - 1)\,&, \qquad G_{\mathbb{C} P^2}(\eta) = 2 \eta (1- \eta^2)\,,\\
        x_1 =\frac{1}{2} (\xi + 1) (1 + \eta)\,&, \qquad x_2 = \frac{1}{2} (\xi - 1) (1 - \eta)\,.
    \end{aligned}
\end{equation}

Finally, the symplectic potentials for product-toric, toric Calabi-type and orthotoric geometries were found in \cite{Lucietti:2023mvj}.



\subsection{Curvature constraints on existence of the Hamiltonian 2-forms} 
\label{sub:curvature_constraints_hamiltonian}

Hamiltonian 2-forms pose the following constraint on the curvature
\begin{proposition}\label{prop:ham_curv-constr-Apostolov}
    If $(M,g,J)$ admits a hamiltonian 2-form then the SD part of the Ricci tensor is proportional to twistor 2-form.
\end{proposition}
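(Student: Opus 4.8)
The plan is to differentiate the defining equation \cref{eq:hamiltonian_Ham-def} for a hamiltonian 2-form and compare the result with the standard Weitzenb\"{o}ck/curvature identities on a K\"{a}hler surface. First I would split $\phi^{ham} = \phi^{+} + \sigma J$ into its self-dual part $\phi^{+} = \phi^{tw}$ and the multiple of the K\"{a}hler form, noting that $\nabla J = 0$, so that the content of \cref{eq:hamiltonian_Ham-def} is entirely carried by $\nabla \phi^{tw}$ together with the Hessian $\nabla \mathrm{d}\sigma$. Taking one more covariant derivative of \cref{eq:hamiltonian_Ham-def} and antisymmetrising in the two derivative slots produces the curvature operator $R(X,Y)$ acting on $\phi^{ham}$ on the left-hand side; on the right-hand side one gets terms linear in $\nabla\mathrm{d}\sigma$ (which are symmetric in $X,Y$ and drop out after antisymmetrisation, up to a curvature correction from commuting derivatives on $\sigma$) and algebraic terms in $\mathrm{d}\sigma$ and $J$. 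This is the usual integrability route for Killing-type equations.

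The key step is then to feed in the fact that on a K\"{a}hler surface the curvature acting on 2-forms is block-diagonal with respect to the SD/ASD splitting, and that the SD Weyl tensor of a K\"{a}hler surface is completely determined by the scalar curvature (it has the single eigenvalue $R/6$ on the primitive SD part and a different one along $J$ — equivalently $W^{+}$ acts on SD forms through $\tfrac{R}{12}(3\,\mathrm{pr}_{\mathbb{R}J} - \mathrm{id})$, and I will quote whichever normalisation is used elsewhere in the paper). Thus the only "new" curvature that can appear when we apply $R(X,Y)$ to the self-dual piece $\phi^{tw}$ is the self-dual Ricci form $\mathcal{R}^{+}$; the ASD part of the curvature only sees $J$ and contributes something proportional to $J$ and to $\mathrm{d}\sigma$-terms. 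Collecting the genuinely self-dual, $\sigma$-independent part of the integrability identity, I expect to land on an equation of the schematic form $\mathcal{R}^{+} \wedge (\text{something built from } \phi^{tw}) = 0$, or more precisely a pointwise linear-algebra relation in $\Lambda^{+}$ forcing $\mathcal{R}^{+}$ to be pointwise proportional to $\phi^{tw}$ wherever $\phi^{tw}\neq 0$; since $\phi^{tw}\not\equiv 0$ and is a twistor form (hence nowhere-vanishing off a lower-dimensional set, by \cref{lem:Pontecorvo}, as $e^{-2\mu}g$ is an honest metric), the proportionality extends to all of $M$ by continuity.

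An alternative, and probably cleaner, route uses \cref{lem:Pontecorvo} directly: $e^{-2\mu}g$ with $e^{-2\mu}I$ is K\"{a}hler, so its Ricci form $\rho'$ is a closed $(1,1)$-form for that structure, and the conformal transformation law for the Ricci form in dimension $4$ expresses $\rho'$ in terms of $\mathcal{R}$, $\nabla\mathrm{d}\mu$ and $\mathrm{d}\mu\otimes\mathrm{d}\mu$; the hamiltonian condition on $\phi^{ham}=e^{\mu}I+\sigma J$ then pins down $\nabla\mathrm{d}\mu$ in terms of $\sigma$, and re-substituting shows the self-dual part of $\mathcal{R}$ collapses onto $I$. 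The main obstacle I anticipate is purely bookkeeping: keeping the SD/ASD projections, the $J$-trace terms, and the various $\tfrac13$ and $\tfrac23$ factors straight, and making sure the $\sigma$-dependent and $\mathrm{d}\sigma$-dependent terms genuinely cancel rather than contributing a spurious extra term in $\mathcal{R}^{+}$ — i.e. verifying that the only surviving self-dual, curvature-valued term is the one proportional to $\mathcal{R}^{+}$. Once that cancellation is checked, the proportionality to $\phi^{tw}$ is immediate.
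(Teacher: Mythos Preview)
The paper does not actually prove this proposition: it simply records it as ``a simple corollary from Proposition 4 in \cite{apostolov2004}''. So there is no in-paper argument to compare against; your proposal is strictly more than what the paper does, and the integrability-of-the-twistor/hamiltonian-equation route you outline is exactly the way the cited result is obtained in the ACG literature.

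One concrete point to fix: in this paper's orientation convention the K\"{a}hler form $J$ is \emph{anti}-self-dual (see the first line of \cref{sub:kahl_geometry}), so it is $W^{-}$, not $W^{+}$, that is algebraically determined by the scalar curvature, with eigenvector $J$ and eigenvalues $(R/6,-R/12,-R/12)$. Your sentence ``the SD Weyl tensor of a K\"{a}hler surface is completely determined by the scalar curvature'' is the opposite convention. This matters for your sketch: when the Riemann curvature acts on $\phi^{tw}\in\Lambda^{+}$, the self-dual projection involves the \emph{unconstrained} $W^{+}$, and you cannot discard it by quoting the K\"{a}hler identity. The way the argument actually closes is that the hamiltonian/closed-twistor condition forces $\phi^{tw}$ to be an eigenform of $W^{+}$ (this is the content of the ACG proposition being cited), and the traceless-Ricci piece, which maps $\Lambda^{+}\leftrightarrow\Lambda^{-}$ and whose action on $J$ produces $\mathcal{R}^{+}$, is then pinned to a multiple of $\phi^{tw}$. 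Your ``alternative route'' via \cref{lem:Pontecorvo} and the conformal change of the Ricci form would also need this eigenform statement for $W^{+}$ in disguise. So the strategy is right, but the specific mechanism you invoke for killing the Weyl contribution is the wrong one in this convention.
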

This proposition follows as a simple corollary from Proposition 4 in \cite{apostolov2004}. We discover that there is a partial converse at least for toric geometries.
\setcounter{lemma}{0}
\begin{lemma}\label{lem:ham_curv-constr}
    Let $(M,g,J)$ be a 4d toric K\"{a}hler geometry with a SD twistor 2-form $\phi^{tw}$ that is invariant under the torus symmetry and belongs to the negative eigenspace of the inversion of angles $\mathrm{i}_\varphi$, i.e. $\iota_{m_1} \iota_{m_2} \phi^{tw} = 0$.\\ Then $(M,g,J)$ admits a hamiltonian 2-form $\phi = \phi^{tw} + \sigma J$ for some $\sigma$ if and only if the SD part of the Ricci form is a multiple of the twistor 2-form
    \begin{equation}
        \phi^{tw} \propto \mathcal{R}^+ =\mathcal{R} - \frac{R}{4} J\,.
    \end{equation}
\end{lemma}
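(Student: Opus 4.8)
The plan is to establish the two directions separately, with the forward implication already handled by \cref{prop:ham_curv-constr-Apostolov}, so the real content is the converse. First I would set up the problem in the diagonalised chart of \cref{lem:diag_chart}: since $\phi^{tw}$ is SD, toric-invariant and lies in the negative eigenspace of $\mathrm{i}_\varphi$, the hypotheses of \cref{lem:diag_chart} are met, so the metric, the K\"{a}hler form, the twistor form $\phi^{tw}=e^{\mu}I^1$, and the moment maps $(x,y)$ take the explicit form there, with $c_1=\xi$, $c_3=\eta$ and the compatibility constraint \cref{eq:twist-gen-int-constr} in force. A hamiltonian 2-form, by \cref{def:hamiltonian-2form}, is a closed $J$-invariant 2-form $\phi=\phi^{tw}+\sigma J$; since $J$ and $\phi^{tw}$ are already closed (the latter because $e^{-2\mu}I^1$ is closed by \cref{lem:Pontecorvo}, though $\phi^{tw}$ itself need not be closed), I would instead work directly from the defining PDE \cref{eq:hamiltonian_Ham-def}. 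The key reduction is that, with $\phi^{tw}$ fixed, \cref{eq:hamiltonian_Ham-def} becomes a first-order overdetermined linear system for the single unknown function $\sigma$, and a solution exists precisely when an integrability (compatibility) condition on its first derivatives holds.

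The core computation is therefore: express $\nabla\phi^{tw}$ using \cref{cor:alt-form-of-tw-eqn} (so $\nabla\phi^{tw}$ is governed by $\mathrm{d}\mu$ together with $I^2,I^3$), and compare with the right-hand side of \cref{eq:hamiltonian_Ham-def} to read off $\mathrm{d}\sigma$ algebraically in terms of $\mu$, $F$, $G$, $x$, $y$ and their derivatives. Concretely I expect to find $\sigma_\xi$ and $\sigma_\eta$ determined, and then the constraint $\sigma_{\xi\eta}=\sigma_{\eta\xi}$ produces a single scalar PDE on the geometric data $(\mu,x,y)$. The claim is that this PDE is equivalent to the vanishing of the $I^3$-component of the Ricci form, i.e. to $\mathcal R^+\propto\phi^{tw}$. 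To see this I would compute $\mathcal R^+$ in the chart \cref{eq:constr_metric}: by toric symmetry $\iota_{I^2}\mathcal R=0$ automatically (as already used in \cref{sub:non-isomorphism_of_the_families}), so $\mathcal R^+$ is automatically a combination of $I^1$ and $I^3$ only, and $\mathcal R^+\propto\phi^{tw}=e^\mu I^1$ is exactly the statement that its $I^3$-component vanishes; that component is a concrete expression in $F,G,\mu,x,y$ — indeed formula $\tfrac14 I^{3\,mn}\mathcal R_{mn}=2\sqrt{FG}/(\eta-\xi)$ appears in the excerpt for the parameterisation \cref{eq:twist-gen-sol-alpha} — and I would show it agrees, up to a nowhere-zero factor, with the integrability obstruction found above. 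The forward direction is then immediate (or simply cite \cref{prop:ham_curv-constr-Apostolov}), and the converse follows because vanishing of the obstruction guarantees a local $\sigma$ solving \cref{eq:hamiltonian_Ham-def}, hence a hamiltonian 2-form.

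For the bookkeeping I would exploit the duality \cref{eq:constr_duality} to halve the work: the $\xi$- and $\eta$-sides of \cref{eq:hamiltonian_Ham-def} are exchanged under $\xi\leftrightarrow\eta$, $F\leftrightarrow G$, $\Psi\leftrightarrow\Phi$, $x\leftrightarrow y$, so one derives $\sigma_\xi$ and obtains $\sigma_\eta$ by duality, and similarly for the two pieces of the curvature obstruction. It is also worth noting that the result is consistent with \cref{thm:main} and \cref{prop:square_norm is quadratic}: the product-toric, Calabi-toric and orthotoric families are exactly those for which $\tfrac14 I^{3\,mn}\mathcal R_{mn}=0$, while for the elliptic, parabolic and hyperbolic families this quantity is $2\sqrt{FG}/(\eta-\xi)\not\equiv0$, so one should expect the obstruction PDE to single out precisely the first three families; I would use this as a sanity check on the computation rather than as part of the proof.

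The main obstacle I anticipate is purely computational stamina: writing $\nabla\phi^{tw}$ and $\nabla(\sigma J)$ in the orthonormal frame \cref{eq:constr_frame}, keeping careful track of which higher derivatives of $x,y$ are to be eliminated via the twistor constraints \cref{eq:constr_xy_xh,eq:constr_twist_xx_hh} (now reduced by $F_\eta=G_\xi=0$), and then verifying that the messy mixed-partial compatibility condition factors through the clean curvature statement. A secondary subtlety is confirming that the proportionality factor relating the integrability obstruction and the $I^3$-component of $\mathcal R^+$ is nowhere vanishing on the relevant open set, so that the equivalence is genuine and not just an implication in one direction; this should follow from $\sqrt{\det g}=x_\xi y_\eta-x_\eta y_\xi\neq0$ and positivity of $F,G,e^\mu$ on the chart.
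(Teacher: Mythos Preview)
Your proposal is correct in outline but takes a substantially more computational route than the paper. Two differences are worth flagging. First, the paper does not work from the PDE \cref{eq:hamiltonian_Ham-def} at all: since by \cref{def:hamiltonian-2form} a hamiltonian 2-form is simply a closed $J$-invariant 2-form whose SD part is twistor, and since $\phi^{tw}$ is already SD and twistor while $\sigma J$ is ASD, the only condition to impose is closure of $\phi^{tw}+\sigma J$. This yields $\mathrm{d}\sigma=-3(IJ)\,\mathrm{d} e^\mu$ in one line, and the integrability obstruction is just $\mathrm{d}\bigl(\iota_{\mathrm{d} e^\mu}IJ\bigr)=0$. Your plan to extract $\sigma_\xi,\sigma_\eta$ from the full covariant equation \cref{eq:hamiltonian_Ham-def} would reach the same point but with considerably more work (and your parenthetical about $\phi^{tw}$ being closed is self-contradictory; it is not closed, only $e^{-2\mu}I$ is). Second, the paper does not use the diagonalised chart of \cref{lem:diag_chart} at all: it stays in the symplectic chart $(x_i,\varphi^i)$, observes that by toric invariance the obstruction 2-form has a single component along $\mathrm{d} x_1\wedge\mathrm{d} x_2\propto I^2+J^2$, and reduces everything to showing $(JI^3)^{mn}\nabla_m\nabla_n e^\mu=0$. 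The link to curvature is then obtained by differentiating the twistor equation \cref{eq:corollary-nabla_I} once more and contracting with $J^{mn}I^{2\,pq}$, which produces $4\mathcal{R}_{pq}I^{3\,pq}$ on one side and $8e^{-\mu}(JI^3)^{mn}\nabla_m\nabla_n e^\mu$ on the other. This is a clean identity rather than a brute-force coordinate check, and it sidesteps entirely your anticipated ``computational stamina'' obstacle and the need to worry about nowhere-vanishing proportionality factors. Your approach would also need separate treatment of the cases where $c_1$ or $c_3$ is constant (product-toric, Calabi-toric), since fixing $c_1=\xi$, $c_3=\eta$ excludes these; the paper's argument is uniform.
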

\begin{proof}
    We only need to prove the ``if'' direction. As usual, write the twistor form as $\phi^{tw}= e^{\mu} I$ where $I$ is a unit SD form and $e^\mu$ its norm. Notice that from \cref{cor:alt-form-of-tw-eqn} for any orthonormal basis $(I^1=I,I^2,I^3)$ one deduces the connection on the bundle of SD forms
    \begin{gather}
            \label{eq:hamiltonian_sd-deriv-I2}
        \nabla I^2 = \alpha \otimes I^3 \,-\, I^3 \mathrm{d}\mu \otimes I^1\,,\\
            \label{eq:hamiltonian_sd-deriv-I3}
        \nabla I^3 = I^2 \mathrm{d}\mu \otimes I^1 \,-\, \alpha \otimes I^2\,\hphantom{,}
    \end{gather}
    for some 1-form $\alpha$ which depends on a Levi-Civita connection.  

    The closure condition for the Hamiltonian 2-form
    \begin{equation}
        \mathrm{d} (\phi^{tw} + \sigma J) = 0 
    \end{equation}
    can be always solved for $\sigma$ 
    \begin{equation}
        \mathrm{d} \sigma = - 3 (I J) \mathrm{d} e^{\mu}
    \end{equation}
    provided that the RHS is closed, which is the integrability condition
    \begin{equation}\label{eq:hamiltonian_ham-int-cond}
        \mathrm{d} \left(\iota_{\mathrm{d} e^\mu} I J\right) = 0
    \end{equation} 
    To simplify it further we will assume the symplectic chart $(x_i, \varphi^i)$ where $x_i$ are moment maps wrt $m_i = \partial_{\varphi^i}$ Killing fields. By our assumption $\mathcal{L}_{m_i} \phi^{tw} = 0$. 

    Introduce an orthonormal basis on $\Lambda^2 M$ as $\{J^1=J, J^2,J^3, I^1 = I, I^2, I^3 = I I^2\}$ where the former are anti-self-dual and the latter are self-dual 2-forms, and the positive eigenspace of $\mathrm{i}_\varphi$ is spanned by $J^2, I^2$.

    From the toric invariance of a twistor form, the integrability constraint is then a 2-form with a single component proportional to $\mathrm{d} x_1 \wedge \mathrm{d} x_2 \propto (I^2 + J^2)$, and, consequently, it is sufficient to show that contraction of \cref{eq:hamiltonian_ham-int-cond} on $I^2$ vanishes. We find that
    \begin{equation}\label{eq:hamiltonian_ham-cond-2}
        \frac{1}{2} \textrm{tr}\left(I^2 \rightarrow \mathrm{d} \left(\iota_{\mathrm{d} e^\mu} I J\right) \right) =  I^{2\,\, mn} J^{\phantom{m} p}_{m} \nabla_n \left(I_{pq} \nabla^q e^\mu\right) = \left(J I^3\right)^{mn} \nabla_m \nabla_n e^\mu.
    \end{equation}

    Now consider the twistor form equation \cref{eq:corollary-nabla_I}. Acting with a second covariant derivative it follows that
    \begin{equation}
        2 J^{mn} I^{2\,\,pq} \nabla_m \nabla_n I_{pq} = 4 \mathcal{R}_{pq} I^{3\,\,pq} = 0
    \end{equation}
    where the last equality follows from our assumption. On the other hand, using \cref{eq:corollary-nabla_I,eq:hamiltonian_sd-deriv-I2,eq:hamiltonian_sd-deriv-I3}
    \begin{equation}
        2 J^{mn} I^{2\,\,pq} \nabla_m \nabla_n I_{pq} = 8 (J I^3)^{mn} \left(\nabla_m \nabla_n \mu + \nabla_m \mu \nabla_n \mu \right) = 8 e^{-\mu} (J I^3)^{mn} \nabla_m \nabla_n e^\mu
    \end{equation}
    which is exactly \cref{eq:hamiltonian_ham-cond-2}.    
\end{proof}

This proposition can be also seen as a generalisation of \cite[Lemma 4]{apostolov2003}, which was proven for weakly self-dual surfaces, a class of geometries whose Ricci form is the hamiltonian 2-form.

\section*{Acknowledgements}
The author is grateful to Maciej Dunajski, James Lucietti, Dmitri Bykov and Christopher Couzens for some helpful discussions. SO acknowledges support by the ``BASIS" Foundation Leader Grant 24-1-1-82-5.

\section*{Statements and declarations}
\textbf{Data Availability} \ Data sharing not applicable to this article as no datasets were generated or analysed during the current study.\\
\textbf{Competing Interests:} The author has no Conflict of interest to declare that is relevant to the content of this article.
\newpage
\appendix

\section{Ricci curvature for geometries with twistor 2-form} 
\label{app:ricci_curvature_for_geometries_with_twistor_2_form}


For toric geometries the Ricci curvature is most conveniently found from the determinant of Gram matrix of Killing fields of toric symmetry.
\begin{proposition}
    Let $(M,g,J)$ be a toric K\"{a}hler geometry. Then the Ricci form and the scalar curvature are expressed through the Gram determinant of the toric Killing vector fields $G_{gram}:=\mathrm{det}\, g(m_i, m_j)$ as 
    \begin{equation}
        \mathcal{R} = \mathrm{d} P\,, \qquad P = \frac{1}{2} J \mathrm{d} (\mathrm{log}\, G_{gr})\,, \qquad R = - \Delta (\mathrm{log}\,  G_{gr})\,.
    \end{equation}
\end{proposition}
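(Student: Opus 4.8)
The plan is to reduce the statement to the standard Kähler computation of the Ricci form, specialised to the toric setting. On any Kähler manifold the Ricci form is exact and equals $\mathcal{R} = -i\,\partial\bar\partial\log\det(g_{\alpha\bar\beta})$ in every holomorphic chart; although the density $\det(g_{\alpha\bar\beta})$ is chart-dependent, its $\partial\bar\partial$ is not, so it may be evaluated in any convenient chart. For a toric Kähler surface the convenient chart is the one adapted to the complexified torus: taking $y^j := \partial^j\mathrm{g}$, the Legendre duals of the moment maps $x_j$, the functions $z^j = y^j + i\,\varphi^j$ are local holomorphic coordinates and the Kähler potential in them is the Legendre transform $u(y) = x_i y^i - \mathrm{g}(x)$ of the symplectic potential.

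First I would identify the Kähler volume density with the Gram determinant: in the coordinates $z^j$ one has $g_{j\bar k} \propto \partial_{y^j}\partial_{y^k} u$, which is the inverse of the Hessian $G^{ij} = \partial^i\partial^j\mathrm{g}$, i.e. $g_{j\bar k} = \mathrm{const}\cdot G_{jk}$, whence $\log\det(g_{j\bar k}) = \log G_{gr} + \mathrm{const}$; like $G_{gr}$, this is a function of the moment maps only. Next, since the complex structure is torus-invariant, for any $h = h(x)$ the two-form $-i\,\partial\bar\partial h$ has, in the $(y^j,\varphi^j)$ chart, only $\mathrm{d}y^j\wedge\mathrm{d}\varphi^k$ components, and with the sign conventions of the paper this equals $\tfrac12\,\mathrm{d}(J\,\mathrm{d}h)$ by a one-line check. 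Applying this to $h = \log G_{gr}$ gives $\mathcal{R} = \tfrac12\,\mathrm{d}\big(J\,\mathrm{d}\log G_{gr}\big) = \mathrm{d}P$ with $P = \tfrac12\,J\,\mathrm{d}(\log G_{gr})$; closedness of $\mathcal{R}$ and the fact that $P$ is a genuine Ricci potential are then automatic.

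For the scalar curvature I would trace the Ricci form against the Kähler form and use the standard Kähler identity relating this trace to the Laplace--Beltrami operator, obtaining $R = -\Delta(\log G_{gr})$. Equivalently, one can work in the symplectic chart $(x_i,\varphi^i)$, where $\det(g_{\mu\nu}) = \det G^{ij}\det G_{ij} = 1$, so that $\Delta$ acts on torus-invariant functions by $\Delta h = \partial^i\big(G_{ij}\,\partial^j h\big)$; combined with Abreu's formula $R = -\partial^i\partial^j G_{ij}$ \cite{abreu2001}, the claim reduces to the purely algebraic identity $\partial^i\big(G_{ij}\,\partial^j\log\det G_{kl}\big) = \partial^i\partial^j G_{ij}$, which follows from $\partial^j\log\det G_{kl} = -G_{kl}\,\partial^j G^{kl}$, from $\partial^j G_{ik} = -G_{im}G_{kn}\,\partial^j G^{mn}$, and from the total symmetry of $\partial^j G^{kl} = \partial^j\partial^k\partial^l\mathrm{g}$.

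The hard part is not conceptual but bookkeeping: fixing the normalisation of the Legendre-dual coordinates and the precise constants and signs in the two Kähler identities above so as to land exactly on the stated $P$ and on $R = -\Delta\log G_{gr}$. The essential content is simply that the Kähler volume density of a toric surface is, up to a multiplicative constant, the Gram determinant $G_{gr}$ of the toric Killing fields, after which $\mathcal{R}$ and $R$ are given by the usual Kähler formulas.
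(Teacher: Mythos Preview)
Your proposal is correct and follows essentially the same route as the paper: pass to the complex chart dual to the symplectic one, observe that the Hermitian metric there is (up to constants) the Gram matrix $F_{ij}=g(m_i,m_j)$ so that $\log\det g_{\alpha\bar\beta}=\log G_{gr}+\text{const}$, and then apply the standard Kähler identities $\mathcal{R}=-i\,\partial\bar\partial\log\det g_{\alpha\bar\beta}$ and $R=\mathcal{R}_{mn}J^{mn}$. Your alternative derivation of the scalar curvature via Abreu's formula and the symmetry of $\partial^i\partial^j\partial^k\mathrm{g}$ is a nice addition not in the paper, but it is a secondary check rather than a different strategy.
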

\begin{proof}
    This is most clearly seen in the complex chart \cite{abreu2001}:
    \begin{align}\nonumber
    &g = F_{ij} \mathrm{d}u^i \mathrm{d}u^j \,+\, F_{ij} \mathrm{d}\varphi^i\,\mathrm{d}\varphi^{j}\,,\\\label{eq:app_kahl_geometry-in-cx-coords}
    &F_{ij}(u) = \frac{\partial}{\partial u^i} \frac{\partial}{\partial u^i} \mathrm{f}\,,\\\nonumber
    &J(\mathrm{d} \varphi^i , \frac{\partial}{\partial u^j}) = - J(\mathrm{d} u^i , \frac{\partial}{\partial \varphi^j}) = \delta^i_j
\end{align}
where $\mathrm{f}$ is the K\"{a}hler potential and $m_i = \frac{\partial}{\partial \varphi^i}$ are toric Killing fields. In this chart the determinant of the metric is the square of the determinant of the Gram matrix
\begin{equation*}
    \mathrm{det} g = (\mathrm{det} F_{ij})^2\,,
\end{equation*}
Using standard definitions $\mathcal{R} = - i \partial \bar{\partial} \mathrm{log}\, g$ and $R = \mathcal{R}_{mn} J^{mn}$, we deduce the result.    
\end{proof}

We now provide the values of Gram determinant for all geometries
\begin{center}
        \begin{tabular}{ |l|l||l|l| }
        \hline
            Family        &  $G_{gr}^{1/2}$    & Family        &   $G_{gr}^{1/2}$\\
        \hline
            Product-toric &  $(FG)^{1/2}$      & Elliptic      &   $2\dfrac{(FG)^{1/2}}{(\eta - \xi)^2}$\\[0.1pt]
                          &                    &               & \\
            Calabi-toric  &  $(FG)^{1/2}$      & Parabolic     &   $\dfrac{(FG)^{1/2}}{(\eta - \xi)^2}$ \\[0.1pt]
                          &                    &               & \\
            Orthotoric    &  $(FG)^{1/2}$      & Hyperbolic    &   $\dfrac{(FG)^{1/2}}{(\eta - \xi)^2}$\\
        \hline  
        \end{tabular}
    \end{center}

The scalar curvature can be found by taking the laplacian, and the determinant of the metric in \cref{eq:constr_metric} chart can be found from the above table by
\begin{equation}
    \left(\mathrm{det} \,g\right)^{1/2}  = e^{\mu} (FG)^{-1/2}\,G_{gr}^{1/2}\,.
\end{equation}
For three hamiltonian 2-form families (product-toric, Calabi-toric and orthotoric) the scalar curvature can be collectively written as
\begin{equation}
    R = - e^{-\mu} (F'' + G'')\,,
\end{equation}
while for the elliptic, parabolic and hyperbolic geometries the collective formula stands as
\begin{equation}
    R = - e^{-\mu} (F'' + G'') - \frac{6 e^{-\mu}}{\eta - \xi} (F' - G') - \frac{12 e^{-\mu}}{(\eta - \xi)^2} (F + G)\,.
\end{equation}

\section{Solutions of the ODEs}
\label{app:solution_of_the_ODEs}

\subsection{\texorpdfstring{Solution of the ODE \cref{eq:twsit-gen_ODE-reg}}{Solution of the first ODE}} 
\label{app:solution_of_the_ODE-reg}

The equation \ $f'' f^3 + (f - \xi f')^3 = 0$ is generalised homogeneous of degree $2$, that is, it is invariant under a symmetry
\begin{equation}
    \xi \rightarrow t \xi\,, \qquad f \rightarrow t^2 f\,, \qquad f^{(n)} \rightarrow t^{2-n} f^{(n)}\,.
\end{equation}
Such a symmetry allows the reduction of order by a following procedure. Introduce a substitution
\begin{equation}
    f = z(t) e^{2 t} \qquad \textrm{where }\ \xi=e^t\ (\xi>0) \ \textrm{ or } \ \xi=-e^t\ (\xi<0)\,.
\end{equation}
The derivatives then transform as
\begin{equation}
    f'(\xi) = (z' + 2 z)e^{t}\,, \qquad f''(\xi) = (z'' + 3 z' + 2 z)\,.
\end{equation}
Under this substitution the ODE becomes autonomous
\begin{equation}
    z^3 \left(z''+3 z'-1\right) -z'^3 -3 z'^2 z -3 z' z^2 +2 z^4 = 0\,,
\end{equation}
and can be further reduced by the substitution $p(z):= z'$ to a first order ODE
\begin{equation}
    (p'p + 3p +2 z)\,z^3 -(p+z)^3=0\,.
\end{equation}
This equation can be solved explicitly
\begin{equation}\label{eq:app-psol-reg}
    p = \dfrac{z(2z-1)}{1+z\,\left( \dfrac{\epsilon}{\sqrt{1+ (2z-1) \hat{c}_{01}}} -1 \right)}
\end{equation}
where $\hat{c}_{01}$ is an integration constant and $\epsilon = \{0, \pm 1\}$\,. The function $z$ is then found from solving $p(z)=z'$. We find that $\epsilon = \pm 1$ branches merge, and
\begin{equation}
    f = z\, e^{2 t} = \hat{c}_{02} \left(\xi + \hat{c}_{01} \hat{c}_{02} \,\pm\, \sqrt{(1-\hat{c}_{01}) \xi^2 + 2 \hat{c}_{01} \hat{c}_{02}\, \xi + \hat{c}_{01}^2 \hat{c}_{02}^2}\right)\,.
\end{equation}
The case $\hat{c}_{01} = 0$ would correspond to $f$, hence, $c_{2}$ being a linear function, which is the linear solution of \cref{sub:conf-orthotoric}. We therefore assume otherwise, and introduce
\begin{equation}
    c_{01} = 1 - \hat{c}_{01}\,, \qquad c_{02} = \hat{c}_{01} \hat{c}_{02}\,, \qquad \hat{c}_{04} = c_{04}\,\hat{c}_{02}
\end{equation}
to match the notation of \cref{eq:twsit_gen_c2-sol2}. Finally, for the branch $\epsilon = 0$ we find
\begin{equation}
    f = z\, e^{2 t} = - \alpha^2 \left( 1 \pm\, \sqrt{\xi^2 + \alpha^2}\right)
\end{equation}
where $\alpha:= e^{\hat{c}_{01}}$\,. To match the notation of \cref{eq:twsit_gen_c2-sol1}, define $\hat{c}_{04} = - c_{04} \alpha^2$\,.

\subsection{\texorpdfstring{Solution of the ODE \cref{eq:twsit-gen_ODE-irreg}}{Solution of the second ODE}} 
\label{app:solution_of_the_ODE-irreg}

For brevity denote $f:= c_{2}$. The ODE is then recast as
\begin{equation}
    f'' \xi^3 -2 c_{01}^2  (\xi f' - f)^3 = 0\,.
\end{equation}
It is invariant under a generalised homogeneous symmetry of degree $1/2$
\begin{equation}
    \xi \rightarrow t \xi\,, \qquad f \rightarrow t^{1/2} f\,, \qquad f^{(n)} \rightarrow t^{1/2-n} f^{(n)}\,.
\end{equation}
Introduce a substitution
\begin{equation}
    f = z(t) e^{t/2} \qquad \textrm{where }\ \xi=e^t\ (\xi>0) \ \textrm{ or } \ \xi=-e^t\ (\xi<0)\,.
\end{equation}
The derivatives then transform as
\begin{equation}
    f'(\xi) = \frac{1}{2} (z + 2 z') e^{-t/2}\,, \qquad f''(\xi) = -\frac{1}{4} (z - 4 z'')e^{-3t/2}\,.
\end{equation}
Under this substitution the ODE becomes autonomous
\begin{equation}
    4 z'' - z + c_{01} ^2 \left(z -2 z' \right)^3 = 0\,,
\end{equation}
and its order can be reduced via a substitution $p(z):=z'$
\begin{equation}
    4 p p' - z + c_{01}^2 (z -2 p)^3 = 0\,.
\end{equation}
The solutions to this ODE are 
\begin{equation}
    p = \frac{z}{2} + \frac{1}{2 c_{01}} \frac{-\hat{c}_{02} \, c_{01} z \pm \, \sqrt{\hat{c}_{02} (c_{01}^2 z^2 - 1) + 1}}{1 + \hat{c}_{02} \,c_{01}^2 z^2}
\end{equation}
for some constant $\hat{c}_{01}$. The special cases that correspond to $\hat{c}_{02} =1$ are
\begin{equation}\label{eq:app-irreg-p-sol}
    p = \frac{z}{2}\,, \qquad p = \frac{z}{2} \frac{c_{01}^2 z^2 - 1}{c_{01}^2 z^2 +1}\,.
\end{equation}
The function $z$ is found from solving the ODE $p(z):=z'$. Assuming $\hat{c}_{02} \neq 1$, one has two branches
\begin{gather}
    f = z e^{t/2} =  \pm \frac{\sqrt{\xi}}{c_{01}} \left[ 1 + \left(c_{03}^{-1}+\frac{1}{4} c_{02}^2 c_{03}\right) \xi + c_{02} \sqrt{\xi (\xi + c_{03})} \right]^{1/2}, \quad c_{03} \neq 0\\
    f = z e^{t/2} =  \pm \frac{\sqrt{\xi}}{c_{01}} \left[ 1 - \left(c_{03}^{-1}-\frac{1}{4} c_{02}^2 c_{03}\right) \xi + c_{02} \sqrt{\xi (c_{03}-\xi)} \right]^{1/2}, \quad c_{03} > 0
\end{gather}
where we introduce new parameters $c_{02}, c_{03}$ which satisfy
\begin{equation}
    c_{02} \,c_{03} = - 2 |\hat{c}_{02}|^{-1/2}\,,
\end{equation}
to match the notation of \cref{eq:twist-gen-irreg_c2-sol1,eq:twist-gen-irreg_c2-sol2}. For $\hat{c}_{02} =1$, the first solution gives $f$ a linear function of $\xi$, i.e. the case of \cref{sub:conf-orthotoric}. The second solution gives
\begin{equation}
    f = \pm \frac{1}{c_{01}} \sqrt{\xi (1 + c_{02} \xi)}
\end{equation}
where $c_{02}$ is a constant of integration.

\bibliographystyle{unsrt}
\bibliography{my_libv3}

\end{document}